\pgfplotsset{compat=1.15}
\setlist[itemize]{leftmargin=*}
\newcommand{\noiseguess}[1]{z^{n,{#1}}}
\newcommand{\rvnoise}{N^n}
\newcommand{\noise}{z^n}
\newcommand{\rvchanout}{R^n}
\newcommand{\rvchanouti}{R_i}
\newcommand{\chanout}{r^n}
\newcommand{\chanouti}{r_i}
\newcommand{\rvdemodout}{Y^n}
\newcommand{\demodout}{y^n}
\newcommand{\codebook}{\mathcal{C}}
\newcommand{\codeword}{c^n}
\newcommand{\codewordi}{c_i}
\newcommand{\cleaned}{\hat{x}^n}
\newcommand{\rvword}{X^n}
\newcommand{\binaryAlphabetN}{\{0,1\}^n}
\newcommand{\guessworkFun}{G}
\newcommand{\guesswork}{\guessworkFun(\rvnoise)}
\newcommand{\mlcodeword}{c^{n,*}}
\newcommand{\decodelist}{\mathcal{L}}
\def\A{ A }
\def  \P{{P}}
\def\PA{{\P(A)}}
\def\PHitIncorrect{\phi}
\newcommand{\B}[1]{{B_{#1}}}
\newcommand{\PB}[1]{{\P(B_{#1})}}
\newcommand{\W}[1]{{U_{(#1)}}}
\newcommand{\Wq}[2]{{U_{(#1), #2}}}
\newcommand{\Wij}[2]{{U_{(#1)}^{#2}}}
\newcommand{\qij}[2]{{q_{#1}^{#2}}}
\newcommand{\qiLj}[1]{{q_1^{L,\{#1\}}}}
\newtheorem{theorem}{Theorem}
\newtheorem{corollary}{Corollary}
\def\Fb{\mathbb{F}}
\title{Soft-output (SO) GRAND and Iterative Decoding \\to Outperform LDPCs}
\begin{document}

\author{Peihong~Yuan,~\IEEEmembership{Member,~IEEE,}
        Muriel M\'edard,~\IEEEmembership{Fellow,~IEEE,}
        Kevin Galligan,
       and~Ken~R.~Duffy,~\IEEEmembership{Senior Member,~IEEE}
\thanks{This paper was presented in part at 2023 IEEE Globecom~\cite{galligan2023upgrade}.}
\thanks{P. Yuan and M. M{\'e}dard are with the Massachusetts Institute of Technology Network Coding \& Reliable Communications Group (e-mails: \{phyuan, medard\}@mit.edu).}
\thanks{K. Galligan is with the Maynooth University Hamilton Institute, (e-mail: kevin.galligan.2020@mumail.ie).}
\thanks{K. R. Duffy is with the Northeastern University Engineering Probability Information \& Communications Laboratory (e-mail: k.duffy@northeastern.edu).}
\thanks{This work was supported by the Defense Advanced Research Projects Agency (DARPA) under Grant HR00112120008. \emph{(Corresponding author: Muriel M{\'e}dard)}}

}

\begin{acronym}
    \acro{PC}{product code}
    \acro{BCH}{Bose-Chaudhuri-Hocquenghem}
    \acro{eBCH}{extended Bose-Chaudhuri-Hocquenghem}
    \acro{GRAND}{guessing random additive noise decoding}
    \acro{SGRAND}{soft GRAND}
    \acro{DSGRAND}{discretized soft GRAND}
    \acro{SRGRAND}{symbol reliability GRAND}
    \acro{ORBGRAND}{ordered reliability bits GRAND}
    \acro{SOGRAND}{soft-output GRAND}
    \acro{5G}{the $5$-th generation wireless system}
    \acro{NR}{new radio}
	\acro{APP}{a-posteriori probability}
	\acro{ARQ}{automated repeat request}
	\acro{ASK}{amplitude-shift keying}
	\acro{AWGN}{additive white Gaussian noise}
	\acro{B-DMC}{binary-input discrete memoryless channel}
	\acro{BEC}{binary erasure channel}
	\acro{BER}{bit error rate}
	\acro{biAWGN}{binary-input additive white Gaussian noise}
	\acro{BLER}{block error rate}
	\acro{uBLER}{undetected block error rate}
	\acro{bpcu}{bits per channel use}
	\acro{BPSK}{binary phase-shift keying}
	\acro{BSC}{binary symmetric channel}
	\acro{BSS}{binary symmetric source}
	\acro{CDF}{cumulative distribution function}
	\acro{CRC}{cyclic redundancy check}
	\acro{DE}{density evolution}
	\acro{DMC}{discrete memoryless channel}
	\acro{DMS}{discrete memoryless source}
	\acro{BMS}{binary input memoryless symmetric}
	\acro{eMBB}{enhanced mobile broadband}
	\acro{FER}{frame error rate}
	\acro{uFER}{undetected frame error rate}
	\acro{FHT}{fast Hadamard transform}
	\acro{GF}{Galois field}
	\acro{HARQ}{hybrid automated repeat request}
	\acro{i.i.d.}{independent and identically distributed}
	\acro{LDPC}{low-density parity-check}
	\acro{GLDPC}{generalized low-density parity-check}
	\acro{LHS}{left hand side}
	\acro{LLR}{log-likelihood ratio}
	\acro{MAP}{maximum-a-posteriori}
	\acro{MC}{Monte Carlo}
	\acro{ML}{maximum-likelihood}
	\acro{PDF}{probability density function}
	\acro{PMF}{probability mass function}
	\acro{QAM}{quadrature amplitude modulation}
	\acro{QPSK}{quadrature phase-shift keying}
	\acro{RCU}{random-coding union}
	\acro{RHS}{right hand side}
	\acro{RM}{Reed-Muller}
	\acro{RV}{random variable}
	\acro{RS}{Reed–Solomon}
	\acro{SCL}{successive cancellation list}
	\acro{SE}{spectral efficiency}
	\acro{SNR}{signal-to-noise ratio}
	\acro{UB}{union bound}
	\acro{BP}{belief propagation}
	\acro{NR}{new radio}
    \acro{CA-Polar}{CRC-assisted successive cancellation list}
	\acro{CA-SCL}{CRC-assisted successive cancellation list}
	\acro{DP}{dynamic programming}
	\acro{URLLC}{ultra-reliable low-latency communication}
    \acro{GCC}{Generalized Concatenated Code}
    \acro{GCCs}{Generalized Concatenated Codes}
    \acro{MDS}{maximum distance separable}
    \acro{ORBGRAND-AI}{ORBGRAND Approximate Independence}
    \acro{BLER}{block error rate}
    \acro{crc}[CRC]{cyclic redundancy check}
    \acro{cascl}[CA-SCL]{CRC-Assisted Successive Cancellation List}
    \acro{uer}[UER]{undetected error rate}
    \acro{MDR}{misdetection rate}
    \acro{LMDR}{list misdetection rate}
    \acro{QC}{quasi-cyclic}
    \acro{SISO}{soft-input soft-output}
    \acro{VN}{variable node}
    \acro{CN}{check node}
    \acro{SI}{soft-input}
    \acro{SO}{soft-output}
    \acro{PAC}{polarization-adjusted convolutional}
    \acro{dRM}{dynamic Reed Muller}
    \acro{QPSK}{quadrature phase shift keying}
    \acro{MCS}{modulation and coding scheme}
    \acro{IR-HARQ}{incremental redundancy hybrid automatic repeat request}
\end{acronym}
\maketitle

\begin{abstract}

We establish that a large, flexible class of long, high redundancy error correcting codes can be efficiently and accurately decoded with guessing random additive noise decoding (GRAND). Performance evaluation demonstrates that it is possible to construct simple product codes with lengths of approximately $200$ to $4000$ bits and rates between $0.2$ and $0.8$ that outperform low-density parity-check (LDPC) codes from the 5G New Radio standard in both AWGN and fading channels. The concatenated structure enables many desirable features, including: low-complexity hardware-friendly encoding and decoding; significant flexibility in length and rate through modularity; and high levels of parallelism in encoding and decoding that enable low latency.

Central is the development of a method through which any soft-input (SI) GRAND algorithm can provide soft-output (SO) in the form of an accurate a-posteriori estimate of the likelihood that a decoding is correct or, in the case of list decoding, the likelihood that each element of the list is correct. The distinguishing feature of soft-output GRAND (SOGRAND) is the provision of an estimate that the correct decoding has not been found, even when providing a single decoding. That per-block SO can be converted into accurate per-bit SO by a weighted sum that includes a term for the SI. Implementing SOGRAND adds negligible computation and memory to the existing decoding process, and using it results in a practical, low-latency alternative to LDPC codes.
\end{abstract}

\begin{IEEEkeywords}
GRAND, concatenated codes, product codes, LDPC, Generalized LDPC 
\end{IEEEkeywords}

\section{Introduction}

Since Shannon's pioneering work \cite{shannon1948}, the quest to design efficiently decodable long, high-redundancy error correction codes has resulted in technical revolutions from product codes \cite{elias_error-free_1954} to concatenated codes \cite{forney1966_concatenated} to turbo codes \cite{berrou_1993_turbo} to turbo product codes \cite{pyndiah_1998} to \ac{LDPC} codes \cite{gallagher1962_lowdensity, costello2007channel}. Since the rediscovery of \ac{LDPC} codes in the 1990s \cite{Sipser96,mackay1997near,mackay2003information} and the subsequent development of practical encoders and decoders, e.g. \cite{richardson2001efficient,richardson2001capacity,chung2001design,mansour2003high,hocevar2004reduced,zhang2010efficient,hailes2015survey}, they have become the gold standard for high performance, long, large redundancy, SI error correction codes. They are used, for example, in standards including ATSC 3.0 \cite{kim2016low} and 5G New Radio  \cite{richardson2018design}.

A common principle behind the design of long, high-redundancy codes is to create them by concatenating and composing shorter component codes. With \ac{SI}, each component is decoded and provides \ac{SO} that informs the decoding of another component. The process is repeated, passing updated soft information around the code until a global consensus is found or the effort is abandoned. Core to performance is a combination of code-structure, which determines how soft information is circulated amongst components, and the quality of the decoder's \ac{SO}, which captures how much is gleaned from each decoding of a component code. E.g., turbo decoding of Elias's product codes \cite{elias_error-free_1954} can avail of powerful component codes, such as \ac{BCH} codes, but relies on approximate \ac{SO} \cite{pyndiah_1998}, while \ac{LDPC} codes avail of weak single parity-check codes but provide high quality per-component \ac{SO}. 

\Ac{GRAND} is a recently developed family of code-agnostic decoding algorithms that can accurately decode any moderate redundancy code in both hard \citep{duffy_capacity-achieving_2019,galligan2021,An22} and soft detection \citep{Duffy19a,duffy2022_ordered,abbas2021list,an2023soft,Duffy23ORBGRANDAI} settings. \Ac{GRAND} algorithms function by sequentially inverting putative noise effects, ordered from most to least likely according to channel properties and soft information, from received signals. The first codeword yielded by inversion of a noise effect is a maximum-likelihood decoding. Since this procedure does not depend on codebook structure, \ac{GRAND} can decode any moderate redundancy code, even non-linear ones \cite{cohen2023aes}. Efficient hardware implementations \cite{Riaz21,Riaz23,Burg24} and syntheses \cite{condo2021_highperformance, abbas2020grand, abbas2021orbgrand, condo2021fixed} for both hard and  soft-detection settings have established the flexibility and energy efficiency of \ac{GRAND} decoding strategies.

Here we significantly extend \ac{GRAND}'s range of operation by developing \ac{SO} variants that can be used to accurately and efficiently decode long, high-redundancy codes. While initial attempts have used \ac{GRAND} solely as a list decoder \cite{condo2022_iterative, galligan2023_block, Hadavian23} with Pyndiah's traditional approach to generate \ac{SO} \cite{pyndiah_1998}, central to the enhanced performance here is proving that any \ac{SI} \ac{GRAND} algorithm can itself readily produce more accurate \ac{SO}. This enhancement results in better \ac{BLER} and \ac{BER} performance as well as lower complexity. 

The \ac{SO} measure we develop is  an accurate estimate of the \ac{APP} that a decoding is correct in the case of a single decoding, and, in the case of list decoding, the probability that each codeword in the list is correct or the correct codeword is not in the list. We derive these probabilities for uniform at random codebooks and demonstrate that the resulting formulae continue to provide accurate SO for structured codebooks. The formulae can be used with any algorithm in the \ac{GRAND} family regardless of the algorithm's query order. 

Calculating the \ac{SO} only requires knowledge of the code's dimensions and the evaluation of the probability of each noise effect query during GRAND's normal operation, so computation of the measure does not significantly increase the decoder's algorithmic complexity or memory requirements. 

In Forney's seminal work \cite{forney1968_exponential},
the classical per-block \ac{SO} approximation from list decoding is the conditional probability that an element of the list is the decoding given the transmitted codeword is in the list.
The core distinction of \ac{SOGRAND} is that it foregoes the conditioning by inherently including an estimate of the likelihood that the correct decoding has not been found. Its use significantly improves the quality of the information that circulates after decoding a component code and, in contrast to LDPC codes, allows the practical use of powerful component codes with multiple bits of redundancy.

\begin{figure}
	\centering
	\footnotesize
	\begin{tikzpicture}[scale=1]
\footnotesize
\begin{semilogyaxis}[
legend style={at={(0,0)},anchor= south west, font=\scriptsize},
ymin=1e-7,
ymax=1,
width=3.5in,
height=3in,
grid=both,
xmin = 0,
xmax = 2.5,
xlabel = $E_b/N_0$ in dB,
ylabel = {BLER},
]

\addplot[black,mark=o]
table[]{x y
0 0.735294117647059
0.25 0.666666666666667
0.5 0.442477876106195
0.75 0.210084033613445
1.0 0.0740740740740741
1.25 0.0190766882869134
1.5 0.00366891693572058
1.75 0.000641099614058032
2 7.76624231918635e-05
2.25 1.05968192163567e-05
2.5 1.76212385263711e-06

};\addlegendentry{LDPC, BP}

\addplot[cyan,mark=o]
table[]{x y
0 0.95238
0.25 0.95238
0.5 0.55556
0.75 0.38462
1 0.19417
1.25 0.054645
1.5 0.016142
1.75 0.00225
2 0.00024203
2.25 3.6026e-05
2.5 5.9726e-06
};\addlegendentry{LDPC, norm-min-sum}

\addplot[blue,mark=o]
table[]{x y
0 1
0.25 0.86957
0.5 0.64516
0.75 0.4878
1 0.18182
1.25 0.062827
1.5 0.016694
1.75 0.0016926
2 1.2086e-04
2.25 1.0122e-05
};\addlegendentry{dRM prod., ORBGRAND Pyndiah}

\addplot[red,mark=x]
table[]{x y
0 0.96234
0.25 0.87946
0.5 0.65605
0.75 0.22635
1.0 0.07194
1.25 0.01776
1.5 0.001743
1.75 0.0001904
2 1.7705e-05
2.25 1.5675e-06
};\addlegendentry{dRM prod., SOGRAND}

\end{semilogyaxis}

\end{tikzpicture}

	\caption{\ac{AWGN} \ac{BLER} performance of the $(1024, 441)$ 5G LDPC with maximum iteration number $50$ as compared to a $(1024, 441)=(32,21)^2$ dRM product code decoded two ways. First, turbo-decoded \cite{pyndiah_1998}, with $\alpha$ and $\beta$ parameters taken from there and maximum iteration number $20$, but using $1$line-ORBGRAND for list decoding with list size $L=4$ as in \cite{galligan2023_block}. Second, turbo-decoded using $\alpha=0.5$ and \ac{SOGRAND} with max iteration $20$, where lists are added to until $L=4$ or the predicted list-BLER is below $10^{-5}$.}
	\label{fig:prod_grand_32_21}
\end{figure}
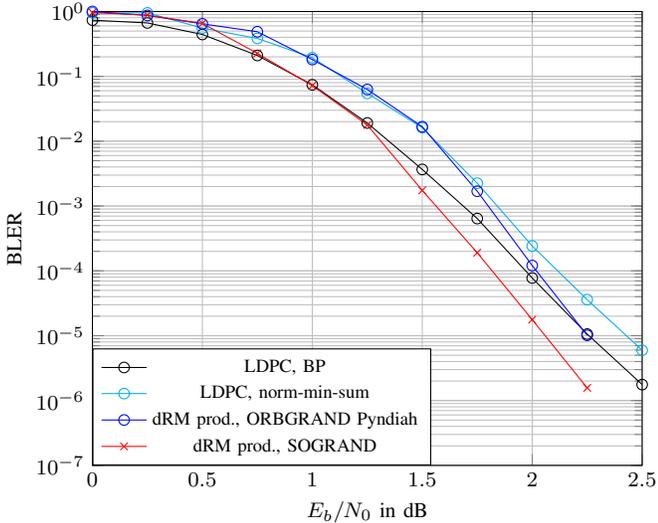

The merit of \ac{SOGRAND} is demonstrated in Fig.~\ref{fig:prod_grand_32_21} where \ac{BLER} vs Eb/N0 in dB for the $(1024,441)$ LDPC in the 5G New Radio standard decoded with \ac{BP} or min-sum \cite{richardson2008modern}
is plotted in the standard \ac{AWGN} setting. Also shown is the performance of a simple product code \cite{elias_error-free_1954}, a $(1024,441)=(32,21)^2$ \ac{dRM} code~\cite{CP21}, decoded two distinct ways. Using Pyndiah's original SO methodology \cite{pyndiah_1998,galligan2023_block} but with a list provided by the 1-line version of \ac{ORBGRAND} \cite{duffy2022_ordered} that enhances its performance over traditional Chase decoding, the product code mildly under-performs in comparison to the \ac{LDPC} code. Decoding with the new \ac{SOGRAND} algorithm developed here sees the product code get a ca. 0.2~dB gain in performance and outperforming the LDPC code. It will be shown in Sec.~\ref{sec:peva} that this improved performance comes with reduced complexity, consistent with previous findings for \ac{GRAND} algorithms, and that the gain is sustained in channels subject to fading.

Comparison of \ac{LDPC} codes and product codes of other dimensions are shown in Sec.~\ref{sec:peva} and establish consistent or better findings: when decoded with \ac{SOGRAND}, Elias's product codes provide better performance than state-of-the-art codes and do so with modest complexity and minimal latency. Variants on the theme of product codes, such as staircase codes \cite{smith2012staircase} and \ac{GLDPC} codes \cite{Liva08}, were developed to enhance the minimum distance of the original design. In section \ref{sec:peva} we show that decoding the \ac{GLDPC} code proposed in \cite{Lentmaier10} with \ac{SOGRAND} results in significantly steeper \ac{BLER} curves and better performance than \ac{LDPC} codes in the 5G standard at higher SNR.

\section{Per-block SO}
For any channel coding scheme it would be desirable if error correction decoders could produce SO in the form of a confidence measure in the correctness of a decoded block. In seminal work on error exponents, Forney \cite{forney1968_exponential} proposed an approximate computation that necessitates the use of a list decoder. It takes the form of a conditional likelihood of correctness given the codeword is in the list. We shall demonstrate that the approach provides an inaccurate estimate in channels with challenging noise conditions. Regardless, its potential utility motivated further investigation, e.g.~\cite{Hof2010}. 

With the recent introduction of \ac{CA-Polar} codes to communications standards \cite{3gpp.38.212}, Forney's approximation has received renewed interest~\cite{sauter2023_error} as one popular method of decoding \ac{CA-Polar} codes, \ac{CA-SCL} decoding, generates a list of candidate codewords as part of its execution, e.g. \cite{niu2012crc,tal2015_list,balatsoukas2015llr}. For convolution or trellis codes, the Viterbi algorithm \cite{forney1973_viterbi} can be modified to produce SO at the sequence level \cite{raghavan1998_reliability}, which has been used in coding schemes with multiple layers of decoding \cite{hagenauer1989_viterbi} and to inform repeat transmission requests \cite{yamamoto1980_viterbi}\cite{raghavan1998_reliability}. 

\ac{SOGRAND} can be used with any moderate redundancy component code, which opens up a large palette of possibilities. Unlike Forney-style approaches, it includes a term for the likelihood that the correct decoding has not yet been found. As a result, it can be evaluated without the need to list decode and the estimate remains accurate in noisy channel conditions. From it, accurate per-bit \ac{SO} can be created through a weighted mixture of identified decodings and \ac{SI} with the likelihood that the decoding is not an element of the list.

\section{Preliminaries}\label{sec:GRAND}
\subsection{Notations}

Let $\codebook$ be a codebook containing $2^k$ binary codewords each of length $n$ bits. Let $\rvword$ be a codeword drawn uniformly at random from the codebook and let $\rvnoise$ denote the binary noise effect that the channel has on that codeword during transmission; that is, $N^n$ encodes the binary difference between the demodulated received sequence and the transmitted codeword. Then $\rvdemodout = \rvword \oplus \rvnoise$ is the demodulated channel output, with $\oplus$ being the element-wise binary addition operator. Let $\rvchanout$ denote real soft channel output per-bit.
Lowercase letters represent realizations of \acp{RV}, with the exception of $\noise$, which is the realization of $\rvnoise$, which is assumed independent of the channel input. We denote the memoryless channel by $f_{R|X}$ and represent both the \ac{PDF} and \ac{PMF} of the \ac{RV} $X$ as $p_{X}(x)$.

\subsection{Background on GRAND}
All \ac{GRAND} algorithms operate by progressing through a series of noise effect guesses $\noiseguess{1}, \noiseguess{2}, \ldots \in \binaryAlphabetN$, whose order is informed by channel statistics, e.g. \cite{An22}, or SI, e.g. \cite{duffy2022_ordered}, until it finds one, $\noiseguess{q}$, that satisfies $\cleaned_q = \demodout \ominus \noiseguess{q} \in \codebook$, where $\ominus$ inverts the effect of the noise on the channel output. If the guesses are in decreasing order of likelihood, then $\noiseguess{q}$ is a maximum-likelihood estimate of $\rvnoise$ and $\cleaned_q$ is a maximum-likelihood estimate of the transmitted codeword $\rvword$. 
Since this guessing procedure does not depend on codebook structure, \ac{GRAND} can decode any moderate redundancy code as long as it has a method for checking codebook membership. For a linear block code with an $(n-k)\times n$ parity-check matrix $\mathbf{H}$, $\cleaned_q$ is a codeword if $\mathbf{H} \cleaned_q = 0^n$ \cite{lin_error_2004}, where $\cleaned_q$ is taken to be a column vector and $0^n$ is the zero vector. To generate a decoding list $\decodelist$ of size $L$, \ac{GRAND}'s guessing procedure continues until $L$ codewords are found \cite{abbas2021list,galligan2023_block}.

While GRAND algorithms can decode any block code, so-called ``even'' linear block codes, e.g. \cite{lin_error_2004}, offer a complexity advantage at no cost in decoding performance so long as the GRAND query generator can skip patterns of a certain Hamming weight \cite{rowshan2022_constrained}. A binary linear code with generator matrix $\mathbf{G} =\{\mathbf{G}_{i,j}\}$ is even if, with addition in $\Fb_2$, $\sum_{j=1}^n \mathbf{G}_{i,j}=0 $ for all $i\in\{1,\ldots,k\}$ as that ensures that $\sum_{i=1}^n c_i = 0$ for every $c^n\in\codebook$.
Consequently, if a codeword, $c^n$, is perturbed by a channel, $y^n=c^n+z^n$, then
\begin{align*}
    \sum_{i=1}^n y_i = \sum_{i=1}^n (c_i+z_i)  =  \sum_{i=1}^n c_i + \sum_{i=1}^n z_i  = \sum_{i=1}^n z_i 
\end{align*}
and the parity of the demodulated bits $y^n$ matches the parity of the noise-effect $z^n$ that has impacted the codeword $c^n$. 

Thus, if an even code is decoded with a GRAND algorithm that has a mechanism to generate noise-effect sequences of given Hamming weight, only noise effect sequences whose parity is the same as $\sum_{i=1}^n y_i$ need be generated. The use of this property can reduce the total query number by a factor of up to $2$, but the performance is the same as if sequences of any Hamming weight were generated as sequences of mismatched Hamming weight cannot lead to the identification of a codebook element. For soft detection ORBGRAND algorithms implemented with the landslide algorithm \cite{duffy2022_ordered,Riaz23}, it is trivial to skip given Hamming weights and so this approach is employed here whenever even codes are used. Note that ORBGRAND has been proven to approach capacity \cite{Liuetal23}.

Many codes are inherently even, including \ac{eBCH} codes, polar codes \cite{arikan2009} and \ac{dRM} codes~\cite{CP21} codes. In general, given a generator, $\mathbf{G}$, for a code that is not even, it is straight-forward to extended it by setting $\mathbf{G}_{i,n+1} = \sum_{j=1}^n \mathbf{G}_{i,j}$ in $\Fb_2$ for each $i\in\{1,\ldots,k\}$, which adds one bit to the code and results in an even code. 

Underlying \ac{GRAND} is a race between two \acp{RV}, the number of guesses until the true codeword is identified and the number of guesses until an incorrect codeword is identified. Whichever of these processes finishes first determines whether the decoding identified by \ac{GRAND} is correct. In the \ac{SI} setting, the guesswork function $\guessworkFun : \binaryAlphabetN \to \{1,\ldots, 2^n\}$ depends on SI, $\rvchanout$, and maps a noise effect sequence to its position in GRAND's guessing order, so that $\guessworkFun(\noiseguess{i})=i$. Thus $\guessworkFun(\rvnoise)$ is a \ac{RV} that encodes the number of guesses until the transmitted codeword would be identified. As $N^n$ is independent of $X^n$, we have that
\begin{align*}
 \P(G(N^n)=q|R^n=r^n) &=p_{N^n|\rvchanout}(z^{n,q}|r^n).
\end{align*}
Consequently, as $Y^n=X^n\oplus N^n$, for any $\codeword\in\codebook$  
\begin{align*}
    p_{\rvword|\rvchanout}(\codeword|\chanout) 
    & = P\left(N^n=\codeword \oplus y^n\middle| \rvchanout=\chanout, N^n\oplus y^n\in\codebook\right)
\end{align*}
If $\W{i} : \Omega \to \{1,\ldots, 2^n-1\}$ is the number of guesses until the $i$-th incorrect codeword would be identified, not accounting for the query that identifies the correct codeword, then \ac{GRAND} returns a correct decoding whenever $\guesswork \leq \W{1}$ and a list of length $L$ containing the correct codeword whenever $\guesswork \leq \W{L}$. Stochastic analysis of the race between these two processes leads to the derivation of the SO in this paper.

\subsection{Background on SO}
Given channel output $\chanout$, which serves as SI to the decoder, and a decoding output $\mlcodeword \in \codebook$, the probability that the decoding is correct is
\begin{align}
p_{\rvword|\rvchanout}(\mlcodeword|\chanout)
&=\frac{f_{\rvchanout|\rvword}(\chanout|\mlcodeword)}{\displaystyle\sum_{\codeword\in\codebook} f_{\rvchanout|\rvword}(\chanout|\codeword)}.\label{formula:forney}
\end{align}
Based on this formula, in Forney's work on error exponents \cite{forney1968_exponential} he derived an optimal threshold for determining whether a decoding should be marked as an erasure. Evaluating the sum in the denominator requires $2^k$ computations, which is infeasible for codebooks of practical size. Thus, given a decoding list $\decodelist \subseteq \codebook$ with $L\geq 2$ codewords, Forney suggested the approximation 
\begin{align}
&p_{\rvword|\rvchanout}(\mlcodeword|\chanout) \nonumber\\
&\approx \frac{f_{\rvchanout|\rvword}(\chanout|\mlcodeword)}{\displaystyle\sum_{\codeword\in\decodelist} f_{\rvchanout|\rvword}(\chanout|\codeword)
+ \underbrace{\sum_{\codeword\in\codebook\setminus\decodelist}f_{\rvchanout|\rvword}(\chanout|\codeword)}_{\text{assumed to be } 0}}, \label{eq:missingterm}
\end{align}
which is the conditional probability that each element of the list is correct given that
the transmitted codeword is in the list. With 
$
\mlcodeword = \arg\max_{\codeword\in\decodelist}  f_{\rvchanout|\rvword}(\chanout|\codeword),
$
Forney's approximation results in an estimate of the correctness probability of $\mlcodeword$ that is no smaller than $1/L$. Having the codewords of highest likelihood in the decoding list gives the most accurate approximation as their likelihoods dominate the sum.

A significant source of improvement in the \ac{SO} that results from the method developed here for \ac{GRAND} is that it provides an accurate, non-zero estimate of final term in the denominator of eq. \eqref{eq:missingterm}, even with a single decoding, i.e. $L=1$. As a result, it generates an estimate that is not conditioned on the codeword having been found. That enables an estimate of the likelihood of correctness of a single decoding, i.e. for $L=1$, which is not possible from Forney's approximation, as well as an
estimate of the likelihood that the decoding has not been found (i.e. is not in the list):
\begin{align}
    p_{\rvword|\rvchanout}\left(\codebook{\setminus}\decodelist|\chanout\right)
    = \frac{\displaystyle
    \sum_{\codeword\in\codebook{\setminus}\decodelist} f_{\rvchanout|\rvword}(\chanout|\codeword)}
    {\displaystyle\sum_{\codeword\in\codebook} f_{\rvchanout|\rvword}(\chanout|\codeword)}.\label{eq:MDR}
\end{align}
It is this distinctive feature that enables enable substantially improved \ac{SO}. 

One downside of Forney's approach is that it requires a list of codewords, which most decoders do not provide. For this reason, a method has recently been proposed to estimate the likelihood of the second-most likely codeword given the first \cite{freudenberger2021_reduced}. A variety of schemes have also been suggested for making erasure decisions, e. \cite{hashimoto1999_composite}.

\section{GRAND Per Block SO}
Throughout this section, we shall assume that the codebook, $\codebook$, consists of $2^k$ codewords drawn uniformly at random from $\binaryAlphabetN$, although the derivation generalises to higher-order symbols. We first derive exact expressions, followed by readily computable approximations, for the probability that the transmitted codeword is each element of a decoding list or is not contained within it. As a corollary, we obtain a formula for the probability that a single-codeword \ac{GRAND} output is incorrect. In Section \ref{sec:results} we demonstrate the formulae provide excellent estimates for structured codebooks. 

\begin{theorem}[GRAND list decoding \ac{APP}s for a uniformly random codebook]
\label{thm:list}
Given the soft information $R^n=r^n$ that determines the query order, 
let $G(N^n)$ be the number of codebook queries until the noise effect sequence $N^n$ is identified.
Let $W_1,\ldots,W_{2^k-1}$ be selected uniformly at random without replacement from $\{1,\ldots,2^n-1\}$ and 
define their rank-ordered version $\W{1}<\cdots<\W{2^k-1}$. With the true noise effect not counted, $\W{i}$ corresponds to the location in the guesswork order of the $i$-th erroneous decoding in a codebook constructed uniformly-at-random. 
Define the partial vectors $\Wij{i}{j} = (\W{i},\ldots,\W{j})$ for each $i\leq j\in\{1,\ldots,2^k-1\}$.

Assume that a list of $L\geq 1$ codewords are identified by a GRAND decoder at query numbers $q_1<\ldots<q_L$.\footnote{$q_i$ denotes the query number of the $i$-th found codeword. Note that the $i$-th found codeword is not necessarily the $i$-th most likely codeword.} Define the associated partial vectors $\qij{i}{j} = (q_i,\ldots,q_{j})$ for each $i\leq j\in\{1,\ldots,2^k-1\}$, and 
\begin{align}
\qiLj{i} = (q_1,\ldots,q_{i-1},q_{i+1}-1,\ldots,q_L-1),
\label{eq:qomit}
\end{align}
which is the vector $\qij{1}{L}$ but with the entry $q_i$ omitted and one subtracted for all entries from $q_{i+1}$ onwards.
Define
\begin{align*}
\PA = \sum_{q>q_L} p_{N^n|\rvchanout}(z^{n,q}|r^n) p_{\Wij{1}{L}}(\qij{1}{L}) 
\end{align*}
which is associated with the transmitted codeword not being in the list,
and, for each $i\in\{1,\ldots,L-1\}$,
\begin{align*}
\PB{i} &= p_{N^n|\rvchanout}(z^{n,q_i}| r^n) p_{\Wij{1}{L-1}}(\qiLj{i}), 
\end{align*}
which is associated with the transmitted codeword being the $i$-th element of
the list,
and
\begin{align*}
\PB{L}  &=  p_{N^n|\rvchanout}(z^{n,q_L}|\chanout)
    \sum_{q\geq q_L} p_{\Wij{1}{L-1},\W{L}}(\qij{1}{L-1},q),
\end{align*}
which is associated with the transmitted codeword being the final element of the
list. Then the probability that $i$-th codeword is correct is
\begin{align}
p_{\rvword|\rvchanout}\left(y^n\oplus z^{n,q_i}|\chanout\right) =
\frac{\PB{i}}{\sum_{i=1}^L \PB{i} + \PA}
\label{eq:app_correct}
\end{align}
and the probability that the correct decoding 
is not in the list is
\begin{align}
p_{\rvword|\rvchanout}\left(\codebook{\setminus}\decodelist|\chanout\right) =
\frac{\PA}{\sum_{i=1}^L \PB{i} + \PA}.
\label{eq:app}
\end{align}

\end{theorem}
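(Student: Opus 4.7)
The plan is to apply Bayes' theorem to a natural partition of possibilities about the position of the true noise effect $\rvnoise$ in GRAND's query order, leveraging, for a uniformly random codebook, the independence between the location of $\rvnoise$ and the locations of the incorrect codewords.

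First, the event observed at the output of GRAND needs to be made precise: within its first $q_L$ queries, GRAND identifies codewords at exactly the positions $q_1 < \cdots < q_L$ and at no others. Call this event $\mathcal{E}$. The $L + 1$ mutually exclusive and exhaustive hypotheses for the guesswork position of $\rvnoise$ are $B_i = \{\rvGuesswork = q_i\}$ for $i \in \{1, \ldots, L\}$, corresponding to the $i$-th list element being the transmitted codeword, and $A = \{\rvGuesswork > q_L\}$, corresponding to the transmitted codeword not being in the list. By Bayes' rule,
\begin{align*}
p_{\rvword|\rvchanout}(y^n \oplus z^{n,q_i} \mid \chanout)
= \frac{\P(B_i \cap \mathcal{E} \mid \chanout)}
{\sum_{j=1}^L \P(B_j \cap \mathcal{E} \mid \chanout) + \P(A \cap \mathcal{E} \mid \chanout)},
\end{align*}
with an analogous identity for the list-miss probability, so the task reduces to evaluating each numerator.

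Second, I would establish the key structural independence. Given $\rvchanout = \chanout$ the guessing order is fixed. Conditional further on $\rvGuesswork = q$, the $2^k - 1$ incorrect codewords form a uniformly random subset of size $2^k - 1$ of $\{1, \ldots, 2^n\} \setminus \{q\}$; relabeling those positions onto $\{1, \ldots, 2^n - 1\}$ by shifting down by one the positions above $q$ produces the order statistics $\W{1} < \cdots < \W{2^k-1}$, whose joint distribution does not depend on $q$. Hence $\rvGuesswork$ and $\Wij{1}{2^k-1}$ are independent given $\chanout$, and each numerator factorizes as a product of $p_{\rvnoise|\rvchanout}(\cdot \mid \chanout)$ and a joint pmf of the $\W{j}$.

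Third, I would translate $\mathcal{E}$ into an event on $\Wij{1}{L}$ under each hypothesis. Under $B_i$ with $i < L$, the $L - 1$ incorrect list elements sit at relabeled positions $\qiLj{i}$, and the requirement that nothing else was seen among queries $1, \ldots, q_L$ reduces to $\W{L} \geq q_L$, which is automatic because $\W{L-1} = q_L - 1$; marginalizing over later order statistics recovers $\PB{i}$. Under $B_L$, the $L - 1$ incorrect list elements sit at unchanged relabeled positions $q_1, \ldots, q_{L-1}$, but now $\W{L-1} = q_{L-1}$ leaves a gap up to $q_L$, so $\W{L} \geq q_L$ must be enforced explicitly, producing the sum in $\PB{L}$. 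Under $A$, summing over the possible positions $q > q_L$ of $\rvnoise$ while the $L$ incorrect list elements remain at the unchanged relabeled positions $q_1, \ldots, q_L$ yields $\PA$. Substituting into the Bayes identity gives \eqref{eq:app_correct} and \eqref{eq:app}.

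The main obstacle, beyond the notational care, is the asymmetry between $i < L$ and $i = L$: the observation that no further codeword was found strictly before the $L$-th query is absorbed automatically into the ordering of the $\W{j}$ when $\rvnoise$ sits inside the list at some earlier position, whereas for $i = L$ it leaves a non-trivial gap between $\W{L-1} = q_{L-1}$ and $q_L$ that must be closed by the explicit sum over $q \geq q_L$.
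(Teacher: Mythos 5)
Your proposal is correct and follows essentially the same route as the paper's proof: partition on the guesswork position of the true noise effect ($G(N^n)=q_i$ for each list index versus $G(N^n)>q_L$), exploit the independence of $G(N^n)$ from the rank-ordered positions of the incorrect codewords, and translate the observed query pattern into conditions on $\W{1},\ldots,\W{L}$ via the shift-by-one relabeling, including the $i<L$ versus $i=L$ asymmetry in whether $\W{L}\geq q_L$ is automatic. Your Bayes-rule framing and explicit justification of the independence are minor presentational differences from the paper's direct computation of the disjoint events' probabilities.
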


\begin{proof}
For $q\in\{1,\ldots,2^n\}$, define $\Wq{i}{q} = \W{i} + 1_{\{\W{i}\geq q\}}$,
so that any $\W{i}$ that is greater than or equal to $q$ is incremented by one. Note that $\Wq{i}{G(N^n)}$ encodes the locations of erroneous codewords in the guesswork order of a randomly constructed codebook given the value of $G(N^n)$ and, in particular, $\Wq{i}{G(N^n)}$ corresponds the number of queries until the $i$-th incorrect codeword is found given $G(N^n)$.

Given the soft information $R^n=r^n$ that determines the query order, we identify the event that the decoding is not in the list as
\begin{align*}
\A = \left\{G(N^n)>q_L,\Wij{1}{L}=\qij{1}{L}\right\}
\end{align*}
and the events where the decoding is the $i$-th element of the list by
\begin{align*}
\B{i} = &\left\{\Wij{1}{i-1}=\qij{1}{i-1}, G(N^n)=q_i, \right. \\
        & \left. \Wij{i}{L-1}+1=\qij{i+1}{L}, \W{L}\geq q_{L} \right\}
\end{align*}
where the final condition is automatically met for $i=\{1,\ldots,L-1\}$ but not for $i=L$.
The conditional probability that a GRAND decoding is not one of the elements in the list given that $L$ elements have been found is
\begin{align}
&\P\left(\A\middle|\A\bigcup_{i=1}^L\B{i},R^n=r^n\right)\nonumber\\
&= \left. \P(\A|R^n=r^n) \middle/ \P\left(\A\bigcup_{i=1}^L\B{i}\right|R^n=r^n\right).
\label{eq:LLR}
\end{align}
As the $\A$ and $\B{i}$ events are disjoint, 
to compute eq. \eqref{eq:LLR} it suffices to simplify $\P(\A)$ and $\P(\B{i})$ for $i\in\{1,\ldots,L\}$ to
evaluate the \ac{APP}s.

Consider the numerator,
\begin{align*}
    \P\left(\A\right) 
    &= \P(G(N^n)>q_L, \Wij{1}{L}=\qij{1}{L}|R^n=r^n)\\
    &= \sum_{q>q_L} p_{N^n|\rvchanout}(z^{n,q}|r^n) p_{\Wij{1}{L}}(\qij{1}{L}),
\end{align*}
where we have used the fact that $G(N^n)$ is independent of $\Wij{1}{L}$ by construction. In considering the denominator, we need only be concerned with the terms $\P(\B{i})$
corresponding to a correct codebook being identified at query $q_i$, for which 
\begin{align*}
\P(\B{i})
= &\P(G(N^n)=q_i,\Wij{1}{i-1}=\qij{1}{i-1}, \\
  & \Wij{i}{L-1}+1=\qij{i+1}{L}, \W{L}\geq q_{L} |R^n=r^n)\\
= & P(G(N^n)=q_i, \Wij{1}{L-1}=\qiLj{i}, \W{L}\geq q_{L} |R^n=r^n)\\
= & p_{N^n|\rvchanout}(z^{n,q_i}| r^n) 
  \sum_{q\geq q_L} p_{\Wij{1}{L-1},\W{L}}(\qiLj{i},q),
\end{align*}
where we have used the definition of $\qiLj{i}$ in eq. \eqref{eq:qomit} and independence. Thus the conditional probability that the correct answer is not found in eq. \eqref{eq:LLR} is given in eq. \eqref{eq:app} and the conditional probability that a given element of the list is correct is given by eq. \eqref{eq:app_correct}.
\end{proof}

As a result of Theorem \ref{thm:list}, to compute the list decoding \ac{APP}s one needs to evaluate or approximate: 
\begin{align*}
    \text{1) } & p_{N^n|\rvchanout}( z^{n,q}|\chanout) \text{ and } \sum_{j\leq q} p_{N^n|\rvchanout}( z^{n,j}|\chanout);\\
    \text{2) } & p_{\Wij{1}{L}}(\qij{1}{L}) \text{ and } \sum_{q\geq q_L} p_{\Wij{1}{L-1},\W{L}}(\qij{1}{L-1},q).
\end{align*}
During a GRAND algorithm's execution, the evaluation of 1) can be achieved by calculating the likelihood of each noise effect query as it is made and retaining a running sum. For 2), geometric approximations, whose asymptotic precision can be verified using the approach described in \cite[Theorem 2]{duffy_capacity-achieving_2019}, can be employed, result in the following corollaries for list decoding and single-codeword decoding, respectively. Note that these estimates only need to know the code dimensions, $n$ and $k$, rather than any specifics of the code construction.

\begin{corollary}[Approximate \ac{APP}s for a random codebook]
If each $\W{i}$ given $\W{i-1}$ is assumed to be geometrically distributed with probability of success $(2^k-1)/(2^n-1)$, 
then 
eq. \eqref{eq:app_correct} describing the \ac{APP} that $i$-th found decoding is correct, $p_{\rvword|\rvchanout}\left(y^n\oplus z^{n,q_i}|\chanout\right)$ can be approximated as
\begin{align}
\frac{\displaystyle p_{N^n|\rvchanout}(z^{n,q_i}|\chanout) }
{
\begin{array}{l}
\displaystyle 
\sum_{i=1}^Lp_{N^n|\rvchanout}(z^{n,q_i}|\chanout) \\
\displaystyle \qquad + \left(1-\sum_{j=1}^{q_L} p_{N^n|\rvchanout}(z^{n,j}|\chanout)\right) \left(\frac{2^k-1}{2^n-1}\right) 
\end{array}
}
\label{eq:app_approx_correct}
\end{align}
and the probability that the list does not contain the transmitted codeword, $p_{\rvword|\rvchanout}\left(\codebook{\setminus}\decodelist|\chanout\right)$ approximated by
\begin{align}
\frac{\displaystyle \left(1-\sum_{j=1}^{q_L} p_{N^n|\rvchanout}(z^{n,j}|\chanout)\right) \left(\frac{2^k-1}{2^n-1}\right) }
{
\begin{array}{l}
\displaystyle 
\sum_{i=1}^Lp_{N^n|\rvchanout}(z^{n,q_i}|\chanout) \\
\displaystyle \qquad + \left(1-\sum_{j=1}^{q_L} p_{N^n|\rvchanout}(z^{n,j}|\chanout)\right) \left(\frac{2^k-1}{2^n-1}\right) 
\end{array}
}
\label{eq:app_approxlist}
\end{align}

\end{corollary}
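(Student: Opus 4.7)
The plan is to substitute the geometric approximation for the joint distribution of the ordered erroneous positions $\W{1}<\cdots<\W{L}$ into the exact expressions derived in Theorem~\ref{thm:list} and then show that a common factor cancels in both ratios. Writing $p=(2^k-1)/(2^n-1)$ for brevity, the assumption that each $\W{i}$ given $\W{i-1}$ is geometrically distributed with success probability $p$ means the increments $\W{i}-\W{i-1}$ are i.i.d.\ geometric, so a telescoping product over the increments yields the closed form
\begin{align*}
p_{\Wij{1}{L}}(\qij{1}{L}) = p^L (1-p)^{q_L - L}.
\end{align*}
The key observation is that this depends on the ordered positions only through the maximum entry $q_L$ and the length $L$.

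First I would evaluate $\PA$ by pulling the factor $p_{\Wij{1}{L}}(\qij{1}{L})$ outside the sum over $q>q_L$, giving
\begin{align*}
\PA = p^L(1-p)^{q_L - L}\left(1 - \sum_{j=1}^{q_L} p_{\rvnoise|\rvchanout}(z^{n,j}|\chanout)\right).
\end{align*}
Next, for each $i\in\{1,\ldots,L-1\}$, the vector $\qiLj{i}$ has length $L-1$ and maximum entry $q_L-1$, so the same closed form delivers $p_{\Wij{1}{L-1}}(\qiLj{i}) = p^{L-1}(1-p)^{q_L-L}$, and therefore $\PB{i} = p_{\rvnoise|\rvchanout}(z^{n,q_i}|\chanout)\, p^{L-1}(1-p)^{q_L - L}$.

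The main obstacle is $\PB{L}$, which contains an explicit tail sum $\sum_{q\geq q_L} p_{\Wij{1}{L-1},\W{L}}(\qij{1}{L-1},q)$ rather than a single PMF evaluation. Here I would apply the closed form to obtain summands $p^L(1-p)^{q-L}$ and evaluate the resulting geometric series, which collapses to $p^{L-1}(1-p)^{q_L-L}$, exactly matching the prefactor that appeared for $i<L$. Consequently every $\PB{i}$ shares the common multiplicative factor $p^{L-1}(1-p)^{q_L-L}$, and $\PA$ carries the same factor with one additional power of $p$. Cancelling this common factor from the numerators and denominators of eq.~\eqref{eq:app_correct} and eq.~\eqref{eq:app} leaves the claimed approximations \eqref{eq:app_approx_correct} and \eqref{eq:app_approxlist}, with the residual $p=(2^k-1)/(2^n-1)$ multiplying the tail-noise term in the denominator.
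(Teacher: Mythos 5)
Your proposal is correct and takes essentially the same route as the paper's proof: both substitute the telescoping closed form $\PHitIncorrect^L(1-\PHitIncorrect)^{q_L-L}$ for the joint PMF into $\PA$ and each $\PB{i}$, observe that the tail sum in $\PB{L}$ collapses to the same prefactor $\PHitIncorrect^{L-1}(1-\PHitIncorrect)^{q_L-L}$ shared by the other terms, and cancel that common factor. The only cosmetic difference is that you evaluate the tail as an explicit geometric series whereas the paper simply states the value of $\P(\Wij{1}{L-1}=\qij{1}{L-1},\W{L}\geq q_L)$.
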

\begin{proof}
Define the geometric distribution's probability of success to be $\PHitIncorrect=(2^k-1)/(2^n-1)$.
Under the assumptions of the corollary, we have the formulae
\begin{align*}
p_{\Wij{1}{L}}(\qij{1}{L}) &= \left(1-\PHitIncorrect\right)^{q_L-L} \PHitIncorrect^L,
\end{align*}
for $i\in\{1,\ldots,L-1\}$
\begin{align*}
p_{\Wij{1}{L-1}}(\qiLj{i}) & = \left(1-\PHitIncorrect\right)^{q_L-L} \PHitIncorrect^{L-1},
\end{align*}
and
\begin{align*}
\P\left(\Wij{1}{L-1}=\qij{1}{L-1},\W{L}\geq q_L\right)
& = \left(1-\PHitIncorrect\right)^{q_L-L} \PHitIncorrect^{L-1}.
\end{align*}
Using those expressions, simplifying eq. \eqref{eq:app_correct} gives eq. \eqref{eq:app_approx_correct} and
eq. \eqref{eq:app} gives eq.~\eqref{eq:app_approxlist}.
\end{proof}

Taken together, this theorem and corollary provide a simple methodology by which \ac{GRAND} can provide an \ac{APP} of the correctness of each element of a decoding list, as well as the likelihood that the correct decoding has not been found. 

The proposed SOGRAND algorithm is compatible with any GRAND algorithm. In addition to the usual operations required for GRAND, we require the following additional computations: during the initialization phase, the probability of all-zero noise, i.e., $p_{N^n\left|R^n\right.}\left(0^n\left|r^n\right.\right)$, is computed. Each time GRAND generates a test pattern $z^n$, we need to compute the probability of this pattern $p_{N^n\left|R^n\right.}\left(z^n\left|r^n\right.\right)$ and accumulate it accordingly. The term $p_{N^n\left|R^n\right.}\left(z^n\left|r^n\right.\right)$ can be computed based on $p_{N^n\left|R^n\right.}\left(0^n\left|r^n\right.\right)$, which needs $w(z^n)$ operations, where $w(z^n)$ denotes the Hamming weight of the test noise pattern $z^n$, which is usually sparse. In total, we need $q+\sum_{j=1}^{q} w\left(z^{n,j}\right)$ additional operations and store additional values, such as the probability of all-zero noise, the probability of the current guess and the accumulation of these guesses.

\section{SO accuracy}
\label{sec:results}
\subsection{Blockwise SO accuracy}
Armed with the approximate \ac{APP} formulae, we investigate its precision for random and structured codebooks. Transmissions were simulated using an \ac{AWGN} channel with \ac{BPSK} modulation. \ac{ORBGRAND} \cite{duffy2022_ordered} was used for \ac{SI} decoding, which produced decoding lists of the appropriate size for both \ac{SO} methods. 
\begin{figure}
\centering
	\footnotesize
	\begin{tikzpicture}[scale=1]
\footnotesize
\begin{loglogaxis}[
legend style={at={(1,0)},anchor= south east, font=\scriptsize},
ymin=1e-5,
ymax=1,
width=3.5in,
height=3.5in,
grid=both,
xmin = 1e-5,
xmax = 1,
xlabel = {predicted},
ylabel = {empirical},
]

\addplot[blue,mark = o]
table[]{x y
0.55261 0.54738
0.2082 0.1939
0.063663 0.055376
0.020279 0.017503
0.0065341 0.0060053
0.0021236 0.0020131
};\addlegendentry{$(64,57)$ eBCH, SOGRAND}

\addplot[red,mark=o]
table[]{x y
0.4338 0.43954
0.18976 0.17771
0.060896 0.050814
0.019147 0.017472
0.0060856 0.0058309
0.0019805 0.0016864
0.00064631 0.00058232
};\addlegendentry{$(32,26)$ eBCH, SOGRAND}


\addplot[brown,mark=+]
table[]{x y
0.41973 0.4403
0.18312 0.16942
0.058928 0.049497
0.018728 0.014667
0.0059058 0.0047478
0.0018682 0.0014175
0.00059906 0.00058774
0.00019111 0.00018206
};\addlegendentry{$(32,21)$ dRM, SOGRAND}

\addplot[cyan,mark=o]
table[]{x y
0.39805 0.4
0.17179 0.16204
0.058354 0.055239
0.018665 0.016643
0.0059389 0.0056039
0.0018812 0.0016671
0.00059974 0.00053002
0.00019266 0.00014562
};\addlegendentry{$(16,11)$ eBCH, SOGRAND}

\addplot[blue,mark = o,dashed,mark options=solid]
table[]{x y
0.061021 0.34496
0.029863 0.26394
0.0099756 0.15061
0.0031675 0.067406
0.00099553 0.030114
0.0003172 0.011404
0.00010063 0.0052469
3.298e-05 0.0038941
};\addlegendentry{$(64,57)$ eBCH, Forney}

\addplot[red,mark=o,dashed,mark options=solid]
table[]{x y
0.060195 0.22836
0.02752 0.15837
0.0093416 0.083176
0.0029963 0.031421
0.00095847 0.012555
0.00030466 0.0042706
9.6353e-05 0.0013043
3.0864e-05 0.00059112
};\addlegendentry{$(32,26)$ eBCH, Forney}


\addplot[brown,mark=+,dashed,mark options=solid]
table[]{x y
0.058908 0.20435
0.026758 0.13078
0.0091157 0.065214
0.0029089 0.027236
0.00092845 0.011601
0.00029659 0.0032741
9.3565e-05 0.0011778
2.9924e-05 0.00071429
};\addlegendentry{$(32,21)$ dRM, Forney}

\addplot[cyan,mark=o,dashed,mark options=solid]
table[]{x y
0.060028 0.14509
0.026106 0.097052
0.0089243 0.043009
0.0029072 0.022639
0.00092802 0.0093391
0.00029718 0.0043343
9.4346e-05 0.0013612
3.0173e-05 0.00036813
};\addlegendentry{$(16,11)$ eBCH, Forney}

\addplot[gray,dashed]
table[]{x y
1e-5 1e-5
1 1
};

\end{loglogaxis}

\end{tikzpicture}
	\caption{Predicted list-BLER based on soft-output vs. empirical list-BLER ($L^\prime=4$): \ac{SOGRAND} with $L=4$, Forney with $L=5$, $E_b/N_0=2$. }
	\label{fig:list_BLER}
\end{figure}
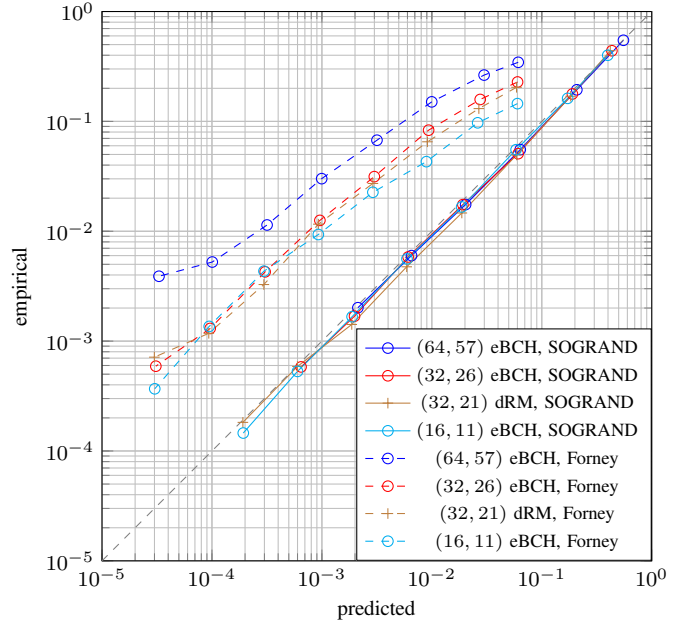
Fig.~\ref{fig:list_BLER} plots the empirical list-\ac{BLER} given the predicted list-\ac{BLER} evaluated using eq. \eqref{eq:app_approxlist}. A list error occurs when the transmitted codeword is not in the list of size $L^\prime$. If the estimate was precise, then the plot would follow the line $x=y$, as the predicted list-\ac{BLER} and the list-\ac{BLER} would match. It can be seen that \ac{SOGRAND} provides near perfect estimates for codes of different structures and dimensions.

\subsection{Bitwise SO accuracy}\label{sec:siso}
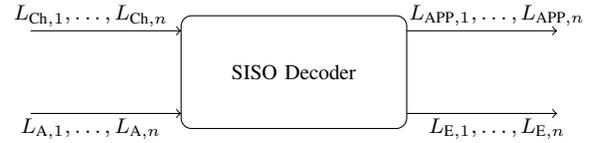
\begin{figure}[t]
	\centering
	\begin{tikzpicture}
\footnotesize
\draw[->] (-2,0.2) to (0,0.2);
\node at (-1.2,0) {$L_{\text{A},1},\dots,L_{\text{A},n}$};
\draw[->] (-2,1.3) to (0,1.3);
\node at (-1.2,1.5) {$L_{\text{Ch},1},\dots,L_{\text{Ch},n}$}; 

\draw[rounded corners]  (0,0) rectangle (3,1.5);
\node at (1.5,0.75)[align=center] {SISO Decoder};

\draw[->] (3,0.2) to (5,0.2);
\draw[->] (3,1.3) to (5,1.3);
\node at (4.2,0) {$L_{\text{E},1},\dots,L_{\text{E},n}$}; 
\node at (4.2,1.5) {$L_{\text{APP},1},\dots,L_{\text{APP},n}$}; 

\end{tikzpicture}
	\caption{Soft-Input Soft-Output (SISO) decoder schematic.}
	\label{fig:siso}
\end{figure}
For many applications, the system requires post-decoding bitwise SO. Fig.~\ref{fig:siso} shows the inputs and outputs of a \ac{SISO} decoder. A \ac{SISO} decoder takes the sum of channel \acp{LLR} $L_{\text{Ch},i}$ and a-priori \acp{LLR} $L_{\text{A},i}$ as input, where
\begin{align*}
L_{\text{Ch},i} = \log\frac{p_{R|X}(\chanouti|0)}{p_{R|X}(\chanouti|1)},~L_{\text{A},i} = \log\frac{p_{X_i}(0)}{p_{X_i}(1)},~i=1,\dots,n.
\end{align*}
and returns \ac{APP} \acp{LLR} $L_{\text{APP},i}$ and extrinsic \acp{LLR} $L_{\text{E},i}$ as output, where
\begin{align*}
L_{\text{APP},i} &= \log\frac{p_{X_i|\rvchanout}(0|\chanout) }{p_{X_i|\rvchanout}(1|\chanout) },
L_{\text{E},i} = L_{\text{APP},i} - L_{\text{A},i} - L_{\text{Ch},i},
\end{align*}
for $i\in\{1,\ldots,n\}$.
Based on the \ac{APP}s approximations in eq. \eqref{eq:app_approx_correct} and \eqref{eq:app_approxlist}, the bitwise SO \ac{APP} LLRs $L_{\text{APP},i}$ is
\begin{align}
&L^\prime_{\text{APP},i}=\nonumber
\\
&\log 
    \frac{\displaystyle
    \sum_{\codeword\in\decodelist:\codewordi=0} p_{\rvword|\rvchanout}(\codeword|\chanout) 
        + p_{\rvword|\rvchanout}\left(\codebook{\setminus}\decodelist|\chanout\right) p_{X|\rvchanouti}(0|\chanouti)
     }
     {\displaystyle 
     \sum_{c^n\in\decodelist:\codewordi=1} p_{\rvword|\rvchanout}(\codeword|\chanout) + p_{\rvword|\rvchanout}\left(\codebook{\setminus}\decodelist|\chanout\right) p_{X|\rvchanouti}(1|\chanouti)
     } \label{eq:LLRi}
\end{align}
which reflects the content of each codeword in the list and it's likelihood, in addition to the likelihood that the codeword is not found whereupon the prior information is retained.

When compared to Pyndiah's approximation, 
\begin{align*}
\log 
    \frac{\displaystyle
    \max_{\codeword\in\decodelist:\codewordi=0} p_{\rvword|\rvchanout}(\codeword|\chanout)}
     {\displaystyle 
     \max_{c^n\in\decodelist:\codewordi=1} p_{\rvword|\rvchanout}(\codeword|\chanout) 
     }
\end{align*}
eq. \eqref{eq:app_approxlist} introduces an additional term that dynamically adjusts the weight between the decoding observations and channel observation. Furthermore, eq. \eqref{eq:app_approxlist} eliminates the need for the saturation value present in Pyndiah's approximation. Fig.~\ref{fig:BER_prediction} shows the \ac{BER} prediction obtained from eq. \eqref{eq:LLRi} and ORBGRAND decoding. The simulated results show that the new \ac{SOGRAND} approach predicts the \ac{BER} accurately across a range of channel conditions, code rates and types.
\begin{figure}
\centering
	\footnotesize
	\begin{tikzpicture}[scale=1]
\footnotesize
\begin{axis}[
legend style={at={(0,1)},anchor= north west, font=\scriptsize},
ymin=1e-5,
ymax=1,
width=3.5in,
height=3.5in,
grid=both,
xmin = 1e-5,
xmax = 1,
xlabel = {predicted},
ylabel = {empirical},
]

\addplot[blue,mark = o]
table[]{x y
0.0042667 0.0079936
0.072027 0.095383
0.12336 0.15094
0.17385 0.20156
0.22414 0.24977
0.2742 0.29748
0.32432 0.34485
0.37426 0.39003
0.42434 0.43878
0.474 0.48904
0.52382 0.52618
0.57398 0.57346
0.62404 0.61826
0.67435 0.67164
0.724 0.72679
0.77425 0.77788
0.8247 0.82478
0.87423 0.87164
0.9245 0.93253
0.97222 0.97424
};\addlegendentry{$(64,57)$ eBCH, SOGRAND}

\addplot[red,mark=o]
table[]{x y
0.0029083 0.0063255
0.071837 0.092362
0.12315 0.14109
0.17365 0.1945
0.22395 0.24662
0.27417 0.29718
0.32437 0.34667
0.37447 0.39701
0.42455 0.43924
0.47453 0.4859
0.52443 0.53456
0.57476 0.57555
0.62478 0.63004
0.67508 0.6838
0.725 0.72242
0.77502 0.77431
0.82497 0.82561
0.87577 0.87872
0.92638 0.92471
0.97999 0.98028

};\addlegendentry{$(32,26)$ eBCH, SOGRAND}


\addplot[brown,mark=+]
table[]{x y
0.0019082 0.0034406
0.07175 0.086417
0.12319 0.13724
0.17392 0.19214
0.22379 0.25271
0.27412 0.28888
0.32424 0.34281
0.37461 0.39534
0.42467 0.44582
0.47477 0.48529
0.52493 0.54424
0.57534 0.59133
0.62537 0.62356
0.67527 0.69187
0.72522 0.7298
0.77529 0.774
0.82595 0.8326
0.87623 0.88263
0.92722 0.93193
0.99026 0.99077

};\addlegendentry{$(32,21)$ dRM, SOGRAND}

\addplot[cyan,mark=o]
table[]{x y
0.0025708 0.0044987
0.071551 0.085537
0.12294 0.13808
0.17362 0.18932
0.22396 0.24233
0.27414 0.29257
0.32441 0.34212
0.37445 0.39184
0.4246 0.43644
0.47477 0.48819
0.52483 0.53569
0.57503 0.58132
0.62498 0.63375
0.67522 0.68564
0.72532 0.73218
0.77541 0.78053
0.82576 0.82932
0.87623 0.87889
0.9272 0.92845
0.98594 0.98627
};\addlegendentry{$(16,11)$ eBCH, SOGRAND}

\addplot[blue,mark = o,dashed,mark options=solid]
table[]{x y
0.00060594 0.014203
0.073576 0.12274
0.12411 0.16252
0.17439 0.19741
0.22456 0.22049
0.27466 0.24346
0.32457 0.26963
0.3746 0.28813
0.42475 0.30905
0.47484 0.33797
0.52442 0.36231
0.57454 0.39444
0.62476 0.41858
0.67467 0.45058
0.72479 0.48484
0.77485 0.53151
0.82483 0.58375
0.87513 0.6425
0.92532 0.73114
0.97904 0.8613
};\addlegendentry{$(64,57)$ eBCH, Pyndiah}

\addplot[red,mark=o,dashed,mark options=solid]
table[]{x y
0.0014532 0.0080093
0.072381 0.089433
0.12358 0.13406
0.17397 0.17148
0.22421 0.21306
0.27424 0.23471
0.3245 0.27583
0.37459 0.30891
0.42468 0.33601
0.47468 0.37963
0.52468 0.4087
0.57499 0.45088
0.62497 0.47447
0.67486 0.52959
0.72536 0.5898
0.77533 0.62825
0.82525 0.68198
0.87578 0.75132
0.92644 0.82804
0.98308 0.94991
};\addlegendentry{$(32,26)$ eBCH, Pyndiah}


\addplot[brown,mark=+,dashed,mark options=solid]
table[]{x y
0.0012432 0.0040623
0.072132 0.083656
0.12307 0.13545
0.17369 0.16494
0.22441 0.22722
0.27464 0.23667
0.32489 0.30435
0.37468 0.31666
0.42465 0.36385
0.47466 0.39612
0.525 0.44664
0.57511 0.48629
0.6249 0.52448
0.67454 0.56127
0.72513 0.62429
0.77511 0.66179
0.8257 0.73632
0.87613 0.79217
0.92727 0.86883
0.99166 0.98143
};\addlegendentry{$(32,21)$ dRM, Pyndiah}

\addplot[cyan,mark=o,dashed,mark options=solid]
table[]{x y
0.0020561 0.0050009
0.07168 0.081178
0.12311 0.12831
0.17369 0.1674
0.22422 0.21213
0.27431 0.25191
0.32438 0.2923
0.37448 0.3251
0.42473 0.36962
0.4748 0.41731
0.5248 0.44937
0.57489 0.49109
0.62499 0.5428
0.67506 0.58528
0.72524 0.63432
0.7753 0.68525
0.82573 0.74862
0.87626 0.80728
0.92733 0.87733
0.98747 0.97499

};\addlegendentry{$(16,11)$ eBCH, Pyndiah}

\addplot[gray,dashed]
table[]{x y
1e-5 1e-5
1 1
};

\end{axis}

\end{tikzpicture}
	\caption{\ac{SO} predicted vs. empirical BER: $L=4$, $E_b/N_0=2$.}
	\label{fig:BER_prediction}
\end{figure}
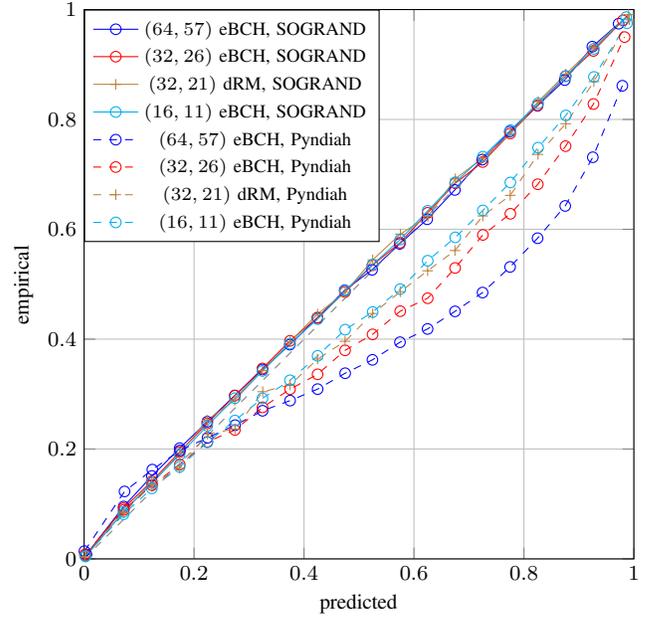

\section{Long Code Performance Evaluation}
\label{sec:peva}

\begin{figure}
	\centering
	\footnotesize
	\begin{tikzpicture}[scale=1]
\footnotesize
\begin{semilogyaxis}[
legend style={at={(0,0)},anchor= south west, font=\scriptsize},
ymin=1e-7,
ymax=1,
width=3.5in,
height=3in,
grid=both,
xmin = 0,
xmax = 4,
xlabel = $E_b/N_0$ in dB,
ylabel = {BLER(solid) / BER(dashed)},
]

\addplot[black,mark=o]
table[]{x y
0 0.88106
0.5 0.65574
1 0.32895
1.5 0.14641
2 0.045249
2.5 0.0075047
3 0.0010019
3.5 8.1631e-05
4 5.794e-06
};\addlegendentry{LDPC, BP}

\addplot[blue,mark=o]
table[]{x y
0 1
0.5 0.81081
1 0.55556
1.5 0.28037
2 0.088235
2.5 0.021292
3 0.0027523
3.5 0.00011041
4 4.8907e-06
};\addlegendentry{polar, SCL} 

\addplot[red,mark=x]
table[]{x y
0 0.85231
0.5 0.55123
1 0.30123
1.5 0.09683
2 0.017001
2.5 0.003542
3 0.0005231
3.5 6.7832e-05
4 7.2112e-06
};\addlegendentry{eBCH prod.}

\addplot[black,mark=o,dashed,mark options=solid]
table[]{x y
0 0.16649
0.5 0.10573
1 0.050756
1.5 0.020873
2 0.0061479
2.5 0.0010122
3 0.00012512
3.5 8.298e-06
4 5.4621e-07
};

\addplot[blue,mark=o,dashed,mark options=solid]
table[]{x y
0 0.50547
0.5 0.40752
1 0.27879
1.5 0.14011
2 0.044324
2.5 0.010718
3 0.0014034
3.5 5.4805e-05
4 2.4683e-06
}; 

\addplot[red,mark=x,dashed,mark options=solid]
table[]{x y
0 0.12423
0.5 0.07123
1 0.03134
1.5 0.0077512
2 0.0016557
2.5 0.00023342
3 3.6422e-05
3.5 4.7023e-06
4 3.9734e-07
};

\end{semilogyaxis}

\end{tikzpicture}

\begin{tikzpicture}[scale=1]
\begin{semilogyaxis}[
legend style={at={(1,1)},anchor= north east, font=\scriptsize},
legend columns=2,
ymin=1e-1,
ymax=1e3,
width=3.5in,
height=1.5in,
grid=both,
xmin = 0,
xmax = 4,
ylabel = {avg. \# of Guesses per bit},
]

\addplot[red,mark=x]
table[]{x y
0 151.29
0.5 110.96
1 52.22
1.5 29.683
2 19.611
2.5 15.407
3 12.729
3.5 10.7931
4 9.2432
};\addlegendentry{eBCH prod.}

\addplot[red,mark=x,dashed,mark options=solid]
table[]{x y
0 17.179
0.5 12.688
1 5.7828
1.5 3.197
2 2.0717
2.5 1.6079
3 1.3215
3.5 1.117
4 0.95903
};\addlegendentry{eBCH prod., parall.} 

\end{semilogyaxis}

\end{tikzpicture}

\begin{tikzpicture}[scale=1]
\begin{semilogyaxis}[
legend style={at={(0,0)},anchor= south west, font=\scriptsize},
legend columns=2,
ymin=1e-1,
ymax=1e2,
width=3.5in,
height=1.5in,
grid=both,
xmin = 0,
xmax = 4,
ylabel = {avg. \# of iterations},
]

\addplot[black,mark=o]
table[]{x y
0 50
0.5 50
1 22.333
1.5 14.907
2 10.634
2.5 7.6697
3 5.9588
3.5 5.0815
4 4.436
};\addlegendentry{LDPC, BP}


\addplot[red,mark=x]
table[]{x y
0 18.021
0.5 12.711
1 5.7
1.5 3.0918
2 1.9763
2.5 1.5267
3 1.2492
3.5 1.0611
4 0.91923
};\addlegendentry{eBCH prod.} 

\end{semilogyaxis}

\end{tikzpicture}
	\caption{AWGN performance of $(256, 121)$ 5G LDPC with max. iterations $50$ and the $(256, 121)$ 5G \ac{CA-Polar} ($24$-bits CRC) decoded with \ac{CA-SCL} ($L=16$) as compared to a $(256, 121)=(16,11)^2$ eBCH product code decoded with \ac{SOGRAND}, with $\alpha=0.5$ and maximum iteration number $20$, where lists are added to until $L=4$ or the predicted list-BLER is below $10^{-5}$. Upper panel: \ac{BLER} and \ac{BER}. Middle panel: average number of queries per-bit until a decoding, where parallelized assumes all rows/columns are decoded in parallel. Lower panel: average number of iterations.}
	\label{fig:prod_grand_16_11}
\end{figure}

\begin{figure}
	\centering
	\footnotesize
	\begin{tikzpicture}[scale=1]
\footnotesize
\begin{semilogyaxis}[
legend style={at={(1,1)},anchor= north east, font=\scriptsize},
ymin=1e-7,
ymax=1,
width=3.5in,
height=3in,
grid=both,
xmin = 0,
xmax = 3,
xlabel = $E_b/N_0$ in dB,
ylabel = {BLER(solid) / BER(dashed)},
]

\addplot[black,mark=o]
table[]{x y
0 0.67568
0.5 0.35842
1 0.068166
1.5 0.011058
2 0.0012219
2.5 8.5235e-05
3 1.0756e-05
};\addlegendentry{LDPC, BP}

\addplot[red,mark=x]
table[]{x y
0 0.78556
0.5 0.40945
1 0.056128
1.5 0.0077405
2 0.00086483
2.5 0.000078374
3 7.1442e-06
};\addlegendentry{CRC prod.} 

\addplot[black,mark=o,dashed,mark options=solid]
table[]{x y
0 0.12072
0.5 0.056615
1 0.0090222
1.5 0.0013427
2 9.9438e-05
2.5 5.1444e-06
3 4.9088e-07
};

\addplot[red,mark=x,dashed,mark options=solid]
table[]{x y
0 0.13234
0.5 0.064752
1 0.0065473
1.5 0.00056472
2 4.9737e-05
2.5 2.4345e-06
3 2.8763e-07
};

\end{semilogyaxis}

\end{tikzpicture}

\begin{tikzpicture}[scale=1]
\begin{semilogyaxis}[
legend style={at={(1,1)},anchor= north east, font=\scriptsize},
ymin=1e1,
ymax=1e4,
width=3.5in,
height=1.5in,
grid=both,
xmin = 0,
xmax = 3,
ylabel = {avg. \# of Guesses per bit},
]

\addplot[red,mark=x]
table[]{x y
0 4335.9
0.5 2541.7
1 1148.8
1.5 758.33
2 579.5702
2.5 479.2741
3 400.126
};\addlegendentry{CRC prod.}

\addplot[red,mark=x,dashed,mark options=solid]
table[]{x y
0 352.87
0.5 204.52
1 93.396
1.5 61.019
2 46.6607
2.5 38.7546
3 32.9464
};\addlegendentry{CRC prod., parall.}

\end{semilogyaxis}

\end{tikzpicture}

\begin{tikzpicture}[scale=1]
\begin{semilogyaxis}[
legend style={at={(1,1)},anchor= north east, font=\scriptsize},
ymin=1e0,
ymax=1e2,
width=3.5in,
height=1.5in,
grid=both,
xmin = 0,
xmax = 3,
ylabel = {avg. \# of iterations},
]

\addplot[black,mark=o]
table[]{x y
0 38.125
0.5 29.688
1 20.243
1.5 12.95
2 8.3332
2.5 6.9155
3 5.9874
};\addlegendentry{LDPC, BP}

\addplot[red,mark=x]
table[]{x y
0 14.135
0.5 8.3235
1 3.6235
1.5 2.3328
2 1.7802
2.5 1.4766
3 1.2321
};\addlegendentry{CRC prod.}

\end{semilogyaxis}

\end{tikzpicture}
	\caption{\ac{AWGN} performance of the $(625, 225)$ 5G LDPC with max. number of iterations $I_\text{max}=50$ as compared to a $(625, 225)=(25,15)^2$ CRC \texttt{2b9} product code decoded using \ac{SOGRAND} with $\alpha=0.5$ and max. iteration $20$, where lists are added to until $L=4$ or the predicted list-BLER is below $10^{-5}$. Upper panel: \ac{BLER} and \ac{BER}. Middle panel: average number of queries per-bit until a decoding with \ac{SOGRAND}, where parallelized assumes all rows/columns are decoded in parallel. Lower panel: average number of iterations.}
	\label{fig:prod_grand_25_15}
\end{figure}

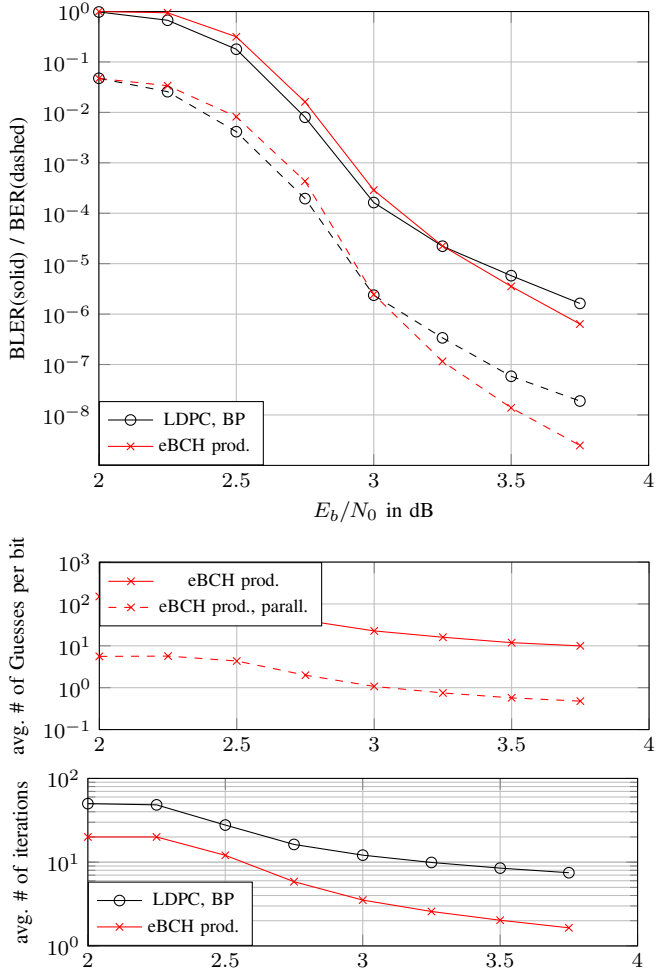
\begin{figure}
	\centering
	\footnotesize
	\begin{tikzpicture}[scale=1]
\footnotesize
\begin{semilogyaxis}[
legend style={at={(0,0)},anchor= south west, font=\scriptsize},
ymin=1e-9,
ymax=1,
width=3.5in,
height=3in,
ytick={1,0.1,0.01,0.001,0.0001,0.00001,0.000001,0.0000001,0.00000001},
grid=both,
xmin = 2,
xmax = 4,
xlabel = $E_b/N_0$ in dB,
ylabel = {BLER(solid) / BER(dashed)},
]


\addplot[black,mark=o]
table[]{x y
2 0.98039
2.25 0.67114
2.5 0.17889
2.75 0.0080173
3 0.00016304
3.25 2.219e-05
3.5 5.7831e-06
3.75 1.62929591361994e-06
};\addlegendentry{LDPC, BP}






\addplot[red,mark=x]
table[]{x y
2 1
2.25 0.94563
2.5 0.31531
2.75 0.016148
3 0.0002876
3.25 2.2434e-05
3.5 3.5442e-06
3.75 6.3561e-07
};\addlegendentry{eBCH prod.} 

\addplot[black,mark=o,dashed,mark options=solid]
table[]{x y
2 0.047405
2.25 0.025716
2.5 0.0041419
2.75 0.00019504
3 2.3697e-06
3.25 3.3772e-07
3.5 5.8494e-08
3.75 1.8893e-08
};


\addplot[red,mark=x,dashed,mark options=solid]
table[]{x y
2 0.047034
2.25 0.033847
2.5 0.0082076
2.75 0.000432
3 2.4417e-06
3.25 1.1546e-07
3.5 1.3845e-08
3.75 2.4829e-09
};

\end{semilogyaxis}

\end{tikzpicture}

\begin{tikzpicture}[scale=1]
\begin{semilogyaxis}[
legend style={at={(1,1)},anchor= north east, font=\scriptsize},
ymin=1e-1,
ymax=1e3,
width=3.5in,
height=1.5in,
ytick={1000,100,10,1,0.1,0.01,0.001,0.0001,0.00001,0.000001,0.0000001,0.00000001},
grid=both,
xmin = 2,
xmax = 4,
ylabel = {avg. \# of Guesses per bit},
]

\addplot[red,mark=x]
table[]{x y
2 146.96
2.25 145.22
2.5 104.9
2.75 49.782
3 29.69
3.25 22.3732
3.5 17.7476
3.75 14.798
};\addlegendentry{eBCH prod.}

\addplot[red,mark=x,dashed,mark options=solid]
table[]{x y
2 5.3045
2.25 5.6349
2.5 4.6403
2.75 2.2121
3 1.2879
3.25 0.96292
3.5 0.75039
3.75 0.6156
};\addlegendentry{eBCH prod., parall.} 

\end{semilogyaxis}

\end{tikzpicture}

\begin{tikzpicture}[scale=1]
\begin{semilogyaxis}[
legend style={at={(0,0)},anchor= south west, font=\scriptsize},
ymin=1e0,
ymax=1e2,
width=3.5in,
height=1.5in,
grid=both,
xmin = 2,
xmax = 4,
ylabel = {avg. \# of iterations},
]
\addplot[black,mark=o]
table[]{x y
2 50
2.25 48.333
2.5 27.75
2.75 16.28
3 12.128
3.25 9.9195
3.5 8.4932
3.75 7.4798
};\addlegendentry{LDPC, BP}

\addplot[red,mark=x]
table[]{x y
2 20
2.25 19.5
2.5 12.689
2.75 5.7449
3 3.348
3.25 2.498
3.5 1.9608
3.75 1.6269
};\addlegendentry{eBCH prod.}

\end{semilogyaxis}

\end{tikzpicture}
	\caption{AWGN performance of the $(4096, 3249)$ 5G LDPC with max. iterations $50$ compared to a $(4096, 3249)=(64,57)^2$ eBCH product code decoded using \ac{SOGRAND} with $\alpha=0.5$ and max. iteration $20$, where lists are added to until $L=4$ or the predicted list-BLER is below $10^{-6}$. Upper panel: \ac{BLER} and \ac{BER}. Middle panel: average number of queries per-bit until a decoding with \ac{SOGRAND}, where parallelized assumes all rows/columns are decoded in parallel. Lower panel: average number of iterations.}
	\label{fig:prod_grand_20_64}
\end{figure}

\begin{figure}
	\centering
	\footnotesize
	\begin{tikzpicture}[scale=1]
\footnotesize
\begin{semilogyaxis}[
legend style={at={(0,0)},anchor= south west, font=\scriptsize},
ymin=1e-7,
ymax=1,
width=3.5in,
height=3in,
grid=both,
xmin = 0,
xmax = 4,
xlabel = $E_b/N_0$ in dB,
ylabel = {BLER(solid) / BER(dashed)},
]

\addplot[black,mark=o]
table[]{x y
0 0.56604
0.5 0.28846
1 0.12821
1.5 0.054945
2 0.024135
2.5 0.0066167
3 0.00099295
3.5 0.00011574
4 1.6494e-05
};\addlegendentry{LDPC, BP}


\addplot[blue,mark=o]
table[]{x y
0 0.78947
0.5 0.625
1 0.375
1.5 0.19481
2 0.10345
2.5 0.027959
3 0.0057837
3.5 0.0008394
4 4.5241e-05
};\addlegendentry{polar, SCL} 


\addplot[red,mark=x]
table[]{x y
0 0.26549
0.5 0.17442
1 0.080429
1.5 0.025862
2 0.011966
2.5 0.0025875
3 0.00043492
3.5 7.5891e-05
4 8.1967e-06
};\addlegendentry{eBCH prod.}

\addplot[black,mark=o,dashed,mark options=solid]
table[]{x y
0 0.12707
0.5 0.067111
1 0.031136
1.5 0.012335
2 0.0056315
2.5 0.0015574
3 0.00021345
3.5 2.1888e-05
4 3.2461e-06
};


\addplot[blue,mark=o,dashed,mark options=solid]
table[]{x y
0 0.3898
0.5 0.3112
1 0.18652
1.5 0.097098
2 0.052425
2.5 0.013994
3 0.0028858
3.5 0.00042757
4 2.2691e-05
}; 


\addplot[red,mark=x,dashed,mark options=solid]
table[]{x y
0 0.063261
0.5 0.038336
1 0.014431
1.5 0.0047751
2 0.0019695
2.5 0.0004043
3 6.4557e-05
3.5 1.0329e-05
4 1.1211e-06
};

\end{semilogyaxis}

\end{tikzpicture}

\begin{tikzpicture}[scale=1]
\begin{semilogyaxis}[
legend style={at={(1,1)},anchor= north east, font=\scriptsize},
legend columns=2,
ymin=1e1,
ymax=1e3,
width=3.5in,
height=1.5in,
grid=both,
xmin = 0,
xmax = 4,
ylabel = {avg. \# of Guesses per bit},
]

\addplot[red,mark=x]
table[]{x y
0 454.96
0.5 381.35
1 308.62
1.5 258.74
2 224.15
2.5 197.92
3 174.8816
3.5 159.0196
4 143.2864
};\addlegendentry{eBCH prod.}

\addplot[red,mark=x,dashed,mark options=solid]
table[]{x y
0 48.79
0.5 40.456
1 32.451
1.5 27.098
2 23.407
2.5 20.646
3 18.2302
3.5 16.54
4 15.0449
};\addlegendentry{eBCH prod., parall.} 

\end{semilogyaxis}

\end{tikzpicture}

\begin{tikzpicture}[scale=1]
\begin{semilogyaxis}[
legend style={at={(0,0)},anchor= south west, font=\scriptsize},
legend columns=2,
ymin=1e-1,
ymax=1e2,
width=3.5in,
height=1.5in,
grid=both,
xmin = 0,
xmax = 4,
ylabel = {avg. \# of iterations},
]

\addplot[black,mark=o]
table[]{x y
0 33.679
0.5 22.317
1 14.684
1.5 10.701
2 8.2647
2.5 6.4618
3 5.5573
3.5 4.9888
4 4.5823
};\addlegendentry{LDPC, BP}


\addplot[red,mark=x]
table[]{x y
0 3.5362
0.5 2.9099
1 2.3045
1.5 1.9036
2 1.632
2.5 1.4355
3 1.2563
3.5 1.1442
4 1.0438
};\addlegendentry{eBCH prod.} 

\end{semilogyaxis}

\end{tikzpicture}
	\caption{AWGN performance of the $(256, 49)$ 5G LDPC with max. iterations $50$ and the $(256, 49)$ 5G \ac{CA-Polar} ($24$-bits CRC) decoded with \ac{CA-SCL} ($L=16$) compared to a $(256, 49)=(16,7)^2$ eBCH product code decoded using \ac{SOGRAND} with $\alpha=0.5$ and max. iteration $20$, where lists are added to until $L=4$ or the predicted list-BLER is below $10^{-5}$. Upper panel: \ac{BLER} and \ac{BER}. Middle panel: average number of queries per-bit until a decoding with \ac{SOGRAND}, where parallelized assumes all rows/columns are decoded in parallel. Lower panel: average number of iterations.}
	\label{fig:prod_grand_256_49}
\end{figure}

For iterative decoding, we construct two types of long codes based on short $(n,k)$ component codes: $(n^2,k^2)$ product codes~\cite{elias_error-free_1954} and $(n^2,n^2-2n(n-k))$ \ac{QC}-GLDPCs~\cite{Lentmaier10} with adjacency matrix
\begin{align*}
\begin{bmatrix}
I_{n}^{(0)} & I_{n}^{(0)} & \cdots &I_{n}^{(0)}\\
I_{n}^{(0)} & I_{n}^{(1)} & \cdots &I_{n}^{(n-1)}
\end{bmatrix},
\end{align*}
where $I_{n}^{(i)}$ a $n\times n$ circulant permutation matrix obtained by the right rotation (by 1 position) of the identity matrix. Product codes can be thought of as a special case of the QC-GLDPC ensemble with variable node degree-$2$.

\begin{figure*}[ht]
	\centering
	\begin{tikzpicture}[scale=0.5]
    \footnotesize
\draw[dashed,rounded corners=2pt] (-0.6,-0.6) rectangle (8.1,7.6);
\draw[gray] (-0.1,-0.1) rectangle (3.1,3.1);
\draw[] (4.5,0) rectangle (7.5,3);
\draw[] (0,4) rectangle (3,7);
\draw[] (4.5,4) rectangle (7.5,7);
\node at (8.25,7.9) {row update};
\node at (3.75,-1) {$(1)$};

\draw[fill=red!20] (0,2.5) rectangle (3,3);
\node at (1.5,2.75) {\scalebox{0.5}{SOGRAND}};
\draw[fill=blue!20] (0,1.9) rectangle (3,2.4);
\node at (1.5,2.15) {\scalebox{0.5}{SOGRAND}};
\draw[fill=brown!20] (0,1.3) rectangle (3,1.8);
\node at (1.5,1.55) {\scalebox{0.5}{SOGRAND}};
\node at (1.5,1) {\scalebox{0.5}{$\vdots$}};

\draw[fill=red!50] (0,6.6) rectangle (3,7);
\draw[fill=blue!50] (0,6.1) rectangle (3,6.5);
\draw[fill=brown!50] (0,5.6) rectangle (3,6);
\node at (1.5,5) {\scalebox{0.5}{$\mathbf{L}_\text{Ch}$}};

\node at (6,5.5) {\scalebox{0.5}{Hard Decision}};

\draw[fill=red!50] (4.5,2.6) rectangle (7.5,3);
\draw[fill=blue!50] (4.5,2.1) rectangle (7.5,2.5);
\draw[fill=brown!50] (4.5,1.6) rectangle (7.5,2);
\node at (6,1) {\scalebox{0.5}{$\mathbf{L}_\text{A}/\mathbf{L}_\text{E}$}};

\draw[rounded corners=2pt=2pt,->,red] (0,6.8) -- (-0.2,6.8) -- (-0.2,2.9) -- (0,2.9);
\draw[rounded corners=2pt,->,blue] (0,6.3) -- (-0.3,6.3) -- (-0.3,2.3) -- (0,2.3);
\draw[rounded corners=2pt,->,brown] (0,5.8) -- (-0.4,5.8) -- (-0.4,1.7) -- (0,1.7);

\draw[dotted] (-0.3,3.5) ellipse (0.25cm and 0.4cm);
\node at (0.3,3.5)[align=center] {\scalebox{0.5}{$L_\text{Ch}$}};

\draw[rounded corners=2pt,->,red] (7.5,2.8) -- (7.7,2.8) -- (7.7,-0.2) -- (-0.2,-0.2) -- (-0.2,2.6) -- (0,2.6);
\draw[rounded corners=2pt,->,blue] (7.5,2.3) -- (7.8,2.3) -- (7.8,-0.3) -- (-0.3,-0.3) -- (-0.3,2) -- (0,2);
\draw[rounded corners=2pt,->,brown] (7.5,1.8) -- (7.9,1.8) -- (7.9,-0.4) -- (-0.4,-0.4) -- (-0.4,1.4) -- (0,1.4);

\draw[dotted] (3.75,-0.3) ellipse (0.4cm and 0.25cm);
\node at (3.75,0.4)[align=center] {\scalebox{0.5}{$L_\text{A}$}};

\draw[dashed,rounded corners=2pt] (8.4,-0.6) rectangle (17.1,7.6);
\draw[gray] (8.9,-0.1) rectangle (12.1,3.1);
\draw[] (13.5,0) rectangle (16.5,3);
\draw[] (9,4) rectangle (12,7);
\draw[] (13.5,4) rectangle (16.5,7);
\node at (12.75,-1) {$(2)$};

\draw[fill=red!20] (9,2.5) rectangle (12,3);
\node at (10.5,2.75) {\scalebox{0.5}{SOGRAND}};
\draw[fill=blue!20] (9,1.9) rectangle (12,2.4);
\node at (10.5,2.15) {\scalebox{0.5}{SOGRAND}};
\draw[fill=brown!20] (9,1.3) rectangle (12,1.8);
\node at (10.5,1.55) {\scalebox{0.5}{SOGRAND}};
\node at (10.5,1) {\scalebox{0.5}{$\vdots$}};

\node at (10.5,5.5) {\scalebox{0.5}{$\mathbf{L}_\text{Ch}$}};
\draw[fill=red!50] (13.5,6.6) rectangle (16.5,7);
\draw[fill=blue!50] (13.5,6.1) rectangle (16.5,6.5);
\draw[fill=brown!50] (13.5,5.6) rectangle (16.5,6);
\node at (15,5) {\scalebox{0.5}{Hard Decision}};

\draw[fill=red!50] (13.5,2.6) rectangle (16.5,3);
\draw[fill=blue!50] (13.5,2.1) rectangle (16.5,2.5);
\draw[fill=brown!50] (13.5,1.6) rectangle (16.5,2);
\node at (15,1) {\scalebox{0.5}{$\mathbf{L}_\text{A}/\mathbf{L}_\text{E}$}};

\draw[rounded corners=2pt,->,red] (12,2.9) -- (12.2,2.9) -- (12.2,6.8) -- (13.5,6.8);
\draw[rounded corners=2pt,->,blue] (12,2.3) -- (12.3,2.3) -- (12.3,6.3) -- (13.5,6.3);
\draw[rounded corners=2pt,->,brown] (12,1.7) -- (12.4,1.7) -- (12.4,5.8) -- (13.5,5.8);
\draw[dotted] (13,6.3) ellipse (0.4cm and 0.8cm);
\node at (13,5)[align=center,scale=0.5] {hard\\decision\\of $L_\text{APP}$};

\draw[rounded corners=2pt,->,red] (12,2.6) -- (12.9,2.6) -- (12.9,2.8) -- (13.5,2.8);
\draw[rounded corners=2pt,->,blue] (12,2) -- (13,2) -- (13,2.3) -- (13.5,2.3);
\draw[rounded corners=2pt,->,brown] (12,1.4) -- (13.1,1.4) -- (13.1,1.8) -- (13.5,1.8);
\draw[dotted] (13,2.15) ellipse (0.4cm and 1cm);
\node at (12.8,0.7)[align=center,scale=0.5] {overwrite\\$L_\text{E}$};

\end{tikzpicture}
    \begin{tikzpicture}[scale=0.5]
    \footnotesize
\draw[dashed,rounded corners=2pt] (-0.6,-0.6) rectangle (8.1,7.6);
\draw[gray] (-0.1,-0.1) rectangle (3.1,3.1);
\draw[] (4.5,0) rectangle (7.5,3);
\draw[] (0,4) rectangle (3,7);
\draw[] (4.5,4) rectangle (7.5,7);
\node at (8.25,7.9) {column update};
\node at (3.75,-1) {$(1)$};

\draw[fill=red!20] (0,2.5) rectangle (3,3);
\node at (1.5,2.75) {\scalebox{0.5}{SOGRAND}};
\draw[fill=blue!20] (0,1.9) rectangle (3,2.4);
\node at (1.5,2.15) {\scalebox{0.5}{SOGRAND}};
\draw[fill=brown!20] (0,1.3) rectangle (3,1.8);
\node at (1.5,1.55) {\scalebox{0.5}{SOGRAND}};
\node at (1.5,1) {\scalebox{0.5}{$\vdots$}};

\draw[fill=red!50] (0,4) rectangle (0.4,7);
\draw[fill=blue!50] (0.5,4) rectangle (0.9,7);
\draw[fill=brown!50] (1,4) rectangle (1.4,7);
\node at (2,5.5) {\scalebox{0.5}{$\mathbf{L}_\text{Ch}$}};

\node at (6,5.5) {\scalebox{0.5}{Hard Decision}};

\draw[fill=red!50] (4.5,0) rectangle (4.9,3);
\draw[fill=blue!50] (5,0) rectangle (5.4,3);
\draw[fill=brown!50] (5.5,0) rectangle (5.9,3);
\node at (6.5,1.5) {\scalebox{0.5}{$\mathbf{L}_\text{A}/\mathbf{L}_\text{E}$}};

\draw[rounded corners=2pt,->,red] (0.2,7) -- (0.2,7.2) -- (-0.2,7.2) -- (-0.2,2.9) -- (0,2.9);
\draw[rounded corners=2pt,->,blue] (0.7,7) -- (0.7,7.3) -- (-0.3,7.3) -- (-0.3,2.3) -- (0,2.3);
\draw[rounded corners=2pt,->,brown] (1.2,7) -- (1.2,7.4) -- (-0.4,7.4) -- (-0.4,1.7) -- (0,1.7);

\draw[dotted] (-0.3,3.5) ellipse (0.25cm and 0.4cm);
\node at (0.3,3.5)[align=center] {\scalebox{0.5}{$L_\text{Ch}$}};

\draw[rounded corners=2pt,->,red] (4.7,0) -- (4.7,-0.2) -- (-0.2,-0.2) -- (-0.2,2.6) -- (0,2.6);
\draw[rounded corners=2pt,->,blue] (5.2,0) -- (5.2,-0.3) -- (-0.3,-0.3) -- (-0.3,2) -- (0,2);
\draw[rounded corners=2pt,->,brown] (5.7,0) -- (5.7,-0.4) -- (-0.4,-0.4) -- (-0.4,1.4) -- (0,1.4);

\draw[dotted] (3.75,-0.3) ellipse (0.4cm and 0.25cm);
\node at (3.75,0.4)[align=center] {\scalebox{0.5}{$L_\text{A}$}};

\draw[dashed,rounded corners=2pt] (8.4,-0.6) rectangle (17.1,7.6);
\draw[gray] (8.9,-0.1) rectangle (12.1,3.1);
\draw[] (13.5,0) rectangle (16.5,3);
\draw[] (9,4) rectangle (12,7);
\draw[] (13.5,4) rectangle (16.5,7);
\node at (12.75,-1) {$(2)$};

\draw[fill=red!20] (9,2.5) rectangle (12,3);
\node at (10.5,2.75) {\scalebox{0.5}{SOGRAND}};
\draw[fill=blue!20] (9,1.9) rectangle (12,2.4);
\node at (10.5,2.15) {\scalebox{0.5}{SOGRAND}};
\draw[fill=brown!20] (9,1.3) rectangle (12,1.8);
\node at (10.5,1.55) {\scalebox{0.5}{SOGRAND}};
\node at (10.5,1) {\scalebox{0.5}{$\vdots$}};

\node at (10.5,5.5) {\scalebox{0.5}{$\mathbf{L}_\text{Ch}$}};

\draw[fill=red!50] (13.5,4) rectangle (13.9,7);
\draw[fill=blue!50] (14,4) rectangle (14.4,7);
\draw[fill=brown!50] (14.5,4) rectangle (14.9,7);
\node at (15.5,5.5)[align=center,scale=0.5] {Hard\\Decision};

\draw[fill=red!50] (13.5,0) rectangle (13.9,3);
\draw[fill=blue!50] (14,0) rectangle (14.4,3);
\draw[fill=brown!50] (14.5,0) rectangle (14.9,3);
\node at (15.5,1.5) {\scalebox{0.5}{$\mathbf{L}_\text{A}/\mathbf{L}_\text{E}$}};

\draw[rounded corners=2pt,->,red] (12,2.9) -- (12.2,2.9) -- (12.2,7.4) -- (13.7,7.4) --(13.7,7);
\draw[rounded corners=2pt,->,blue] (12,2.3) -- (12.3,2.3) -- (12.3,7.3) -- (14.2,7.3) -- (14.2,7);
\draw[rounded corners=2pt,->,brown] (12,1.7) -- (12.4,1.7) -- (12.4,7.2) -- (14.7,7.2) -- (14.7,7);

\draw[dotted] (13,7.3) ellipse (0.4cm and 0.25cm);
\node at (13,6.5)[align=center,scale=0.5] {hard\\decision\\of $L_\text{APP}$};

\draw[rounded corners=2pt,->,red] (12,2.6) -- (12.8,2.6) -- (12.8,3.4) -- (13.7,3.4) -- (13.7,3);
\draw[rounded corners=2pt,->,blue] (12,2) -- (12.9,2) -- (12.9,3.3) -- (14.2,3.3) -- (14.2,3);
\draw[rounded corners=2pt,->,brown] (12,1.4) -- (13.1,1.4) -- (13,3.2) -- (14.7,3.2) -- (14.7,3);

\draw[dotted] (13,2.15) ellipse (0.4cm and 1cm);
\node at (12.8,0.7)[align=center, scale=0.5] {overwrite\\$L_\text{E}$};

\end{tikzpicture}
	\caption{Block turbo decoding of product codes with SOGRAND.}
	\label{fig:turboDec}
\end{figure*}
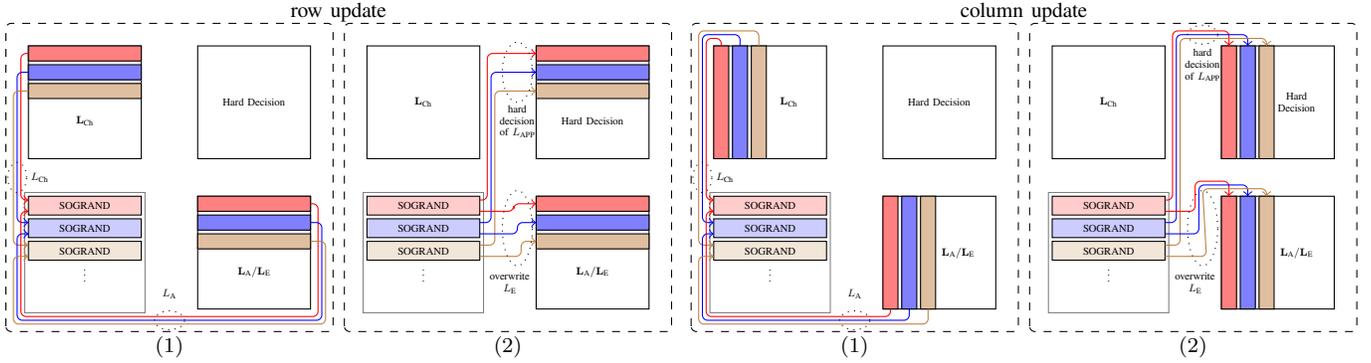

Block turbo decoding of product codes works as follows (see Fig.~\ref{fig:turboDec}):
\begin{itemize}
    \item[0] The channel LLRs are stored in an $n\times n$ matrix $\mathbf{L}_\text{Ch}$. The a priori LLRs of the coded bits are initialized to zero, i.e., $\mathbf{L}_\text{A}=\mathbf{0}$.
    \item[1] Each row of $\mathbf{L}_\text{Ch}+\mathbf{L}_\text{A}$ is processed by an \ac{SISO} decoder, and the resulting APP LLRs and extrinsic LLRs are stored in the corresponding rows of $\mathbf{L}_\text{APP}$ and $\mathbf{L}_\text{E}$, respectively. A hard decision is made based on $\mathbf{L}_\text{APP}$. If all rows and columns of the decision correspond to valid codewords, the block turbo decoder returns the hard output, indicating successful decoding. Otherwise, $\mathbf{L}_\text{A}$ is set to $\alpha\mathbf{L}_\text{E}$, for some $\alpha>0$, and the decoder proceeds to the column update. 
    \item[2] Each column of $\mathbf{L}_\text{Ch}+\mathbf{L}_\text{A}$ is decoded and the columns of $\mathbf{L}_\text{APP}$ and $\mathbf{L}_\text{E}$ are updated as step $1$. A hard decision is performed using $\mathbf{L}_\text{APP}$. If the obtained binary matrix is valid, decoding success is declared. If the maximum iteration count is reached, a decoding failure is returned. Otherwise, we set $\mathbf{L}_\text{A} = \alpha\mathbf{L}_\text{E}$ and proceed to the next iteration (i.e., return to step 1).
\end{itemize}
In simulation, we set $\alpha=0.5$ and use 1-line ORBGRAND in conjunction with eq. \eqref{eq:LLRi} as the SISO \ac{SOGRAND} decoder. With even codes the demodulated signal is used to identify the parity of the noise effect that is being sought.

Decoding of GLDPC codes is based on the \ac{BP} principle over the Tanner graph which is a generalization of the block turbo decoding of product codes. Each decoding iteration consists of an exchange of soft messages between \acp{VN} and \acp{CN}. Each \ac{CN} works as an SISO decoder as described in Sec.~\ref{sec:siso}, i.e., each CN treats the extrinsic LLRs from last iteration as a-priori information and generates the corresponding extrinsic LLRs. Again, for \ac{SOGRAND} 1-line \ac{ORBGRAND} with eq. \eqref{eq:LLRi} is used. In simulation, the outgoing extrinsic LLRs are always weighted with the scaling factor $\alpha=0.5$. At the end of each iteration, an \ac{APP} LLR is calculated for each coded bit, and a hard decision is made. If the resulting binary sequence is a valid codeword, then decoding is declared complete. 

To evaluate performance, we consider both \ac{AWGN} and Rician fading channels. In upper panels, we report \ac{BLER} and \ac{BER} metrics for decoding accuracy. In middle panels, for \ac{GRAND}-based algorithms we report the average number of queries per coded bit until a decoding has been found and, in the presence of parallelized decoding of rows and columns, the the average sum of the maximum number of queries per-bit, which serve as a proxy for decoding complexity and energy \cite{duffy2022_ordered,an2023soft,Riaz23}, where multiple queries can be made per clock-cycle in hardware. In lower panels, we show the average number of iterations until a decoding is found, which serves as a proxy for latency. In keeping with the existing use for \ac{LDPC} code decoding, each half-iteration is the largest entirely parallelized unit. For product codes, all rows can be decoded in parallel and all columns can be decoded in parallel. 

\subsection{AWGN Channels}
The results in Fig. \ref{fig:prod_grand_32_21} show that Elias's original product code can give better \ac{BLER} performance than a 5G \ac{LDPC} code when decoded with the new \ac{SOGRAND}. We first demonstrate that this holds consistently for other code dimensions, where all \ac{LDPC} codes are from the 5G  standard. Fig. \ref{fig:prod_grand_16_11} shows a comparison for $(256, 121)$ \ac{LDPC} and a $(16,11)^2$ \ac{eBCH} product code, where it can be seen that the product code outperforms the \ac{LDPC} in terms of both \ac{BLER} and \ac{BER}. The middle plot shows the average number of \ac{SOGRAND} codebook queries per-bit per product code decoding,  showing reduced effort for improved decoding. Note that each iteration of the 1-line \ac{ORBGRAND} decoding of a product code all rows or column component codes can be decoded entirely in parallel. In that setting, the \enquote{parallelized} results in Fig.~\ref{fig:prod_grand_16_11} show the average sum of the maximum number of queries per-bit required for each parallel decoding of all rows or columns, which results in significantly reduced decoding latency. Furthermore, that \ac{GRAND} codebook queries are themselves parallelizable, which has been exploited in in-silicon realizations where multiple queries are made per clock-cycle \cite{Riaz21,Riaz23}. The lower plot demonstrates that the average number of iterations required to identify a decoding is dramactically smaller for the product code than the \ac{LDPC} code, strongly indicating lower latency decoding.

As \ac{GRAND} algorithms can decode any moderate redundancy code, the product code construction is not confined to the \ac{dRM} and \ac{eBCH} codes reported in Figs \ref{fig:prod_grand_32_21} and \ref{fig:prod_grand_16_11}. Fig. \ref{fig:prod_grand_25_15} provides a further example where a $(625, 225)$ \ac{LDPC} code is compared with a $(25,15)^2$ product code that uses a simple \ac{CRC} code as its component code. Again, the product code outperforms the \ac{LDPC} code with fewer iterations, minimal complexity, and extremely low latency when parallelized. Fig. \ref{fig:prod_grand_20_64} provides a further comparison for the $(4096,3249)$ \ac{LDPC} code and a $(64,57)^2$ \ac{eBCH} product code, resulting in the same conclusion. Fig.~\ref{fig:prod_grand_256_49} provides a further example where a $(256, 49)$ \ac{LDPC} code is compared with a $(16,7)^2$ eBCH product code. Furthermore, the performance of a $(256,49)$ 5G \ac{CA-Polar} code with $24$-bit CRC decoded with \ac{CA-SCL} is also provided. As CA-SCL's complexity increases exponentially with list size, when implemented in hardware, typical list sizes are 2, 4 or 8 \cite{liang2016hardware,Tao_CASCL_21,Kam_CASCL_22}. Here we allow CA-SCL a generous $L=16$ to extract improved performance. Thus, when decoded in an iterative fashion with 1-line \ac{ORBGRAND} and the new \ac{SO} as \ac{SOGRAND}, Elias's product codes offer comparable or better performance than the \ac{LDPC} codes selected for 5G \ac{NR}. For relatively short and low-rate codes, such as $(256,49)$, the proposed approach performs slightly better than the CA-Polar code decoded with CA-SCL. 

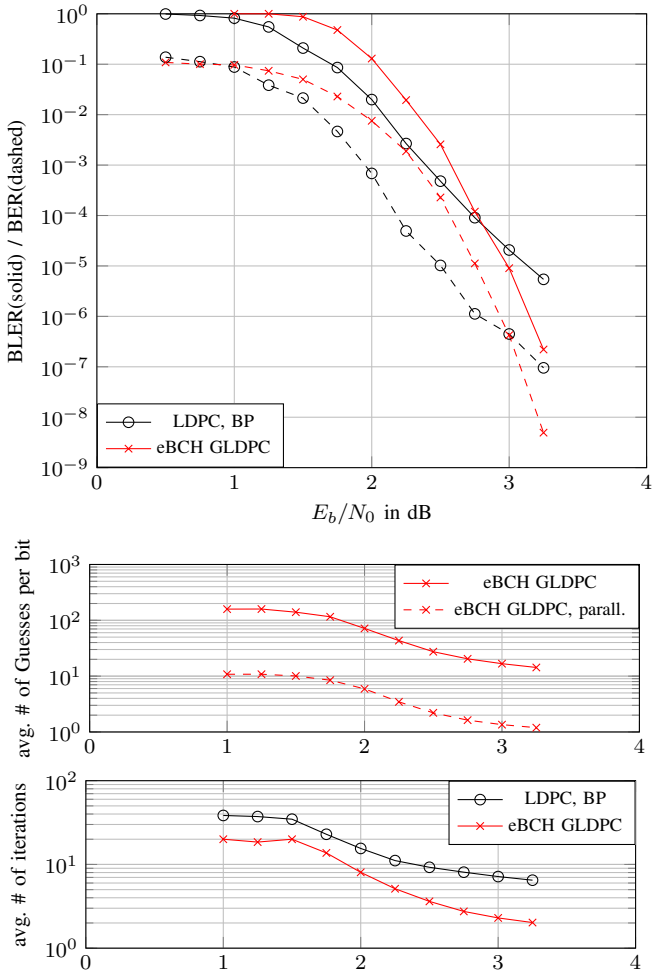
\begin{figure}
	\centering
	\footnotesize
	\begin{tikzpicture}[scale=1]
\footnotesize
\begin{semilogyaxis}[
legend style={at={(0,0)},anchor= south west, font=\scriptsize},
ymin=1e-9,
ymax=1,
width=3.5in,
height=3in,
ytick={1,0.1,0.01,0.001,0.0001,0.00001,0.000001,0.0000001,0.00000001,0.000000001},
grid=both,
xmin = 0,
xmax = 4,
xlabel = $E_b/N_0$ in dB,
ylabel = {BLER(solid) / BER(dashed)},
]

\addplot[black,mark=o]
table[]{x y
0.5 0.9901
0.75 0.92593
1 0.81967
1.25 0.54945
1.5 0.20921
1.75 0.085911
2 0.019916
2.25 0.002673
2.5 0.00047986
2.75 9.0555e-05
3 2.0705e-05
3.25 5.4220e-06
};\addlegendentry{LDPC, BP}

\addplot[red,mark=x]
table[]{x y
1 1
1.25 1
1.5 0.87432
1.75 0.47684
2 0.12958
2.25 0.019432
2.5 0.0025784
2.75 0.00011942
3 9.0231e-06
3.25 2.2023e-07
};\addlegendentry{eBCH GLDPC}

\addplot[black,mark=o,dashed,mark options=solid]
table[]{x y
0.5 0.13772
0.75 0.11241
1 0.087996
1.25 0.038418
1.5 0.021307
1.75 0.0046404
2 0.00068321
2.25 0.000049349
2.5 1.0236e-05
2.75 1.1211e-06
3 4.4719e-07
3.25 9.528e-08
};

\addplot[red,mark=x,dashed,mark options=solid]
table[]{x y
0.5 0.10869
0.75 0.10015
1 0.097021
1.25 0.074072
1.5 0.050078
1.75 0.022896
2 0.0075559
2.25 0.0018799
2.5 0.00022785
2.75 1.1217e-05
3 4.2255e-07
3.25 4.9465e-09
};

\end{semilogyaxis}

\end{tikzpicture}

\begin{tikzpicture}[scale=1]
\begin{semilogyaxis}[
legend style={at={(1,1)},anchor= north east, font=\scriptsize},
ymin=1e0,
ymax=1e3,
width=3.5in,
height=1.5in,
grid=both,
xmin = 0,
xmax = 4,
ylabel = {avg. \# of Guesses per bit},
]

\addplot[red,mark=x]
table[]{x y
1 158.71
1.25 156.04
1.5 119.34
1.75 102.92
2 58.669
2.25 36.488
2.5 27.086
2.75 22.393
3 19.636
3.25 17.3544
};\addlegendentry{eBCH GLDPC}

\addplot[red,mark=x,dashed,mark options=solid]
table[]{x y
1 10.811
1.25 10.828
1.5 8.8984
1.75 7.6959
2 4.3769
2.25 2.6372
2.5 1.9232
2.75 1.5715
3 1.375
3.25 1.2334
};\addlegendentry{eBCH GLDPC, parall.}

\end{semilogyaxis}

\end{tikzpicture}

\begin{tikzpicture}[scale=1]
\begin{semilogyaxis}[
legend style={at={(1,1)},anchor= north east, font=\scriptsize},
ymin=1e0,
ymax=1e2,
width=3.5in,
height=1.5in,
grid=both,
xmin = 0,
xmax = 4,
ylabel = {avg. \# of iterations},
]

\addplot[black,mark=o]
table[]{x y
1 38.444
1.25 37.308
1.5 34.7
1.75 22.92
2 15.527
2.25 11.116
2.5 9.2476
2.75 8.0831
3 7.1669
3.25 6.4853
};\addlegendentry{LDPC, BP}

\addplot[red,mark=x]
table[]{x y
1 19.81
1.25 19.476
1.5 13.846
1.75 11.855
2 6.5644
2.25 4.0362
2.5 2.9682
2.75 2.4396
3 2.1321
3.25 1.9229
};\addlegendentry{eBCH GLDPC}

\end{semilogyaxis}

\end{tikzpicture}
	\caption{\ac{AWGN} performance of the $(1024, 640)$ 5G LDPC with max. iteration $50$ as compared to a $(1024,640)$ \ac{GLDPC} code \cite{Lentmaier10} with eBCH nodes using \ac{SOGRAND} with $\alpha=0.6$ and max. iteration $20$, where lists are added to until $L=4$ or the predicted list-BLER  is below $10^{-5}$. Upper panel: \ac{BLER} and \ac{BER}. Middle panel: average number of queries per-bit until a decoding with \ac{SOGRAND}, where parallelized assumes all rows/columns are decoded in parallel. Lower panel: average number of iterations.}
	\label{fig:gldpc_GRAND}
\end{figure}

While the \ac{LDPC} codes used in 5G trade waterfall for an error floor, variants of product codes called \ac{GLDPC} codes have been developed that have better minimum distance and much lower error floors. As with product codes, \ac{GLDPC} code components can also be decoded in parallel, which would result in low-latency decoding. Fig.~\ref{fig:gldpc_GRAND} provides results for one such example when a \ac{GLDPC} developed in \cite{Lentmaier10} is decoded with \ac{SOGRAND}. The \ac{GLDPC} results in a significantly steeper waterfall \ac{BLER} curve with a significantly lower error floor, at the expense of slightly degraded performance at lower SNR, offering more design possibilities for future systems. Again, much fewer iterations are needed for decoding of the \ac{GLDPC} than the \ac{LDPC}.

\subsection{Block Fading Channels}
While the results so far have been for \ac{AWGN} channels, the same conclusions are found for fading channels, as we demonstrate here. Consider a complex-alphabet block fading \ac{AWGN} channel of coherence block length $N_c$. In each coherence block, the transmitted symbols will be affected by the same fading coefficient $H$,
$Y_i = H X_i + Z_i,~i=1,\dots,N_c$, where ${Z}_i\sim\mathcal{CN}\left(0,{2\sigma^2}\right)$. The noise power spectral density is defined to be $N_0= 2\sigma^2$. If the receiver knows $H=h$ and, with $h^*$ being the complex conjugate of $h$, computes
\begin{align*}
\frac{1}{h}Y_i=\frac{h^*}{\left|h\right|^2}Y_i=X_i+\frac{h^*}{\left|h\right|^2}Z_i,
\end{align*}
then they can form the equivalent channel
$\tilde{Y}_i=X_i+\tilde{Z}_i$,
where
$\tilde{Y}_i={{h^*}Y_i}/{\left|h\right|^2}$ and 
$\tilde{Z}_i\sim\mathcal{CN}\left(0,{2\sigma^2}/{|h|^2}\right)$.
Suppose that the channel coefficient is estimated as $\hat{h}$ and Gray labeled \ac{QPSK} $x=f_\text{QPSK}\left(c_1c_2\right)$ is used, i.e., $\mathcal{X}=\left\{\pm\Delta\pm\Delta j\right\}$. The \acp{LLR} of $c_1,c_2$ are given by
\begin{align*}
\log\frac{ p_{Y|C_1,H}(y|0,\hat{h}) }{p_{Y|C_1,H}(y|1,\hat{h})} & = \frac{2\Delta\Re(y\,\hat{h}^*)}{\sigma^2}\\
\log\frac{ p_{Y|C_2,H}(y|0,\hat{h}) }{p_{Y|C_2,H}(y|1,\hat{h})} & = \frac{2\Delta\Im(y\,\hat{h}^*)}{\sigma^2}.
\end{align*}
The SNR is then defined to be
\begin{align*}
\frac{E_s}{N_0} = \frac{\text{E}\left[\|X\|^2\right]\,\text{E}\left[\|H\|^2\right]}{2\sigma^2}= \frac{\Delta^2\,\text{E}\left[\|H\|^2\right]}{\sigma^2}.
\end{align*}
We consider standard block Rician fading with factor-$K$~\cite{5271272,stuber2001principles}, $Y_i = H_\text{Ri} X_i + Z_i$. The fading coefficient of Rician fading is defined by $H_\text{Ri} = \sqrt{K/(K+1)} + \sqrt{1/(K+1)}H_\text{Ra}$,
where $H_\text{Ra}$ is the fading coefficient of Rayleigh fading, i.e., the real part and the imaginary part of $H_\text{Ra}$ are independently Gaussian with zero mean and variance $\sigma_\text{Ra}^2$, i.e., $H_\text{Ra}\sim\mathcal{CN}\left(0,2\sigma_\text{Ra}^2\right)$. A high $K$ factor suggests a more deterministic component in the channel, often due to the predominance of line-of-sight transmission, and as $K$ becomes very large, the channel approximates an \ac{AWGN} scenario. Conversely, a low $K$ factor indicates small-scale fading, usually resulting from a heavily scattered environment, and the channel transitions to Rayleigh fading when $K$ is zero. Without loss of generality, we set $\sigma_\text{Ra}^2=0.5$ such that
$\text{E}\left[\|H_\text{Ra}\|^2\right]=\text{E}\left[\|H_\text{Ri}\|^2\right]=1.$
The codewords are randomly interleaved prior to mapping to \ac{QPSK} symbols. In each codeword frame of length $N$ bits, there are $t$ distinct coherence blocks, i.e., $N = 2tN_c$.

\begin{figure}
	\centering
	\footnotesize
	\begin{tikzpicture}[scale=1]
\footnotesize
\begin{semilogyaxis}[
legend style={at={(0,0)},anchor= south west, font=\scriptsize},
ymin=1e-5,
ymax=1,
width=3.5in,
height=3in,
grid=both,
xmin = 0,
xmax = 8,
xlabel = $E_b/N_0$ in dB,
ylabel = {BLER(solid) / BER(dashed)},
]

\addplot[black,mark=o]
table[]{x y
0 0.78125
1 0.41667
2 0.20778
3 0.081169
4 0.02867
5 0.0081296
6 0.0023928
7 0.00054727
8 0.00013016
};\addlegendentry{LDPC, BP}

\addplot[blue,mark=o]
table[]{x y
0 0.8
1 0.57143
2 0.30037
3 0.13638
4 0.057307
5 0.02401
6 0.0075019
7 0.0028245
8 0.0012542
};\addlegendentry{dRM prod., ORBGRAND Pyndiah}

\addplot[red,mark=x]
table[]{x y
0 0.83333
1 0.47619
2 0.25126
3 0.10482
4 0.025641
5 0.007367
6 0.0015798
7 0.00033922
8 0.00007129
};\addlegendentry{dRM prod., SOGRAND} 

\addplot[black,mark=o,dashed,mark options=solid]
table[]{x y
0 0.1717
1 0.077343
2 0.037202
3 0.014437
4 0.0050318
5 0.0013398
6 0.00041996
7 9.6939e-05
8 2.5283e-05
};

\addplot[blue,mark=o,dashed,mark options=solid]
table[]{x y
0 0.21445
1 0.16278
2 0.08456
3 0.041084
4 0.017086
5 0.0074045
6 0.0023956
7 0.00090747
8 0.00040336
};

\addplot[red,mark=x,dashed,mark options=solid]
table[]{x y
0 0.17184
1 0.089332
2 0.043847
3 0.017938
4 0.0045117
5 0.0012458
6 0.00026952
7 5.9608e-05
8 1.0647e-05
};

\end{semilogyaxis}

\end{tikzpicture}

\begin{tikzpicture}[scale=1]
\begin{semilogyaxis}[
legend style={at={(1,1)},anchor= south east, font=\tiny},
legend columns=2,
ymin=1e0,
ymax=1e4,
width=3.5in,
height=1.5in,
ytick={10000,1000,100,10,1,0.1,0.01,0.001,0.0001,0.00001,0.000001,0.0000001,0.00000001},
grid=both,
xmin = 0,
xmax = 8,
ylabel = {avg. \# of Guesses per bit},
]

\addplot[red,mark=x]
table[]{x y
0 4139.5
1 2306.9
2 1496.6
3 745.35
4 391.54
5 269.76
6 168.14
7 116.37
8 73.5282
};\addlegendentry{dRM prod., SOGRAND} 

\addplot[red,mark=x,dashed,mark options=solid]
table[]{x y
0 277.84
1 146.67
2 98.707
3 51.411
4 28.028
5 20.247
6 13.61
7 10.103
8 7.4731
};\addlegendentry{dRM prod., SOGRAND, parall.}

\addplot[blue,mark=x]
table[]{x y
0 4229.5
1 3391.2
2 1833.1
3 1198.6
4 673.42
5 469.04
6 357.2
7 299.8
8 286.8771
};\addlegendentry{dRM prod., ORBGRAND Pyndiah} 

\addplot[blue,mark=x,dashed,mark options=solid]
table[]{x y
0 266.15
1 206.25
2 116.37
3 76.379
4 44.754
5 31.46
6 24.45
7 20.944
8 20.0704
};\addlegendentry{dRM prod., ORBGRAND Pyndiah, parall.} 

\end{semilogyaxis}

\end{tikzpicture}

\begin{tikzpicture}[scale=1]
\begin{semilogyaxis}[
legend style={at={(0,0)},anchor= south west, font=\tiny},
ymin=1e-1,
ymax=1e2,
width=3.5in,
height=1.5in,
grid=both,
xmin = 0,
xmax = 8,
ylabel = {avg. \# of iterations},
]

\addplot[black,mark=o]
table[]{x y
0 50
1 36.333
2 21.514
3 11.968
4 8.6132
5 5.6229
6 4.5235
7 3.8004
8 3.3429
};\addlegendentry{LDPC, BP} 

\addplot[red,mark=x]
table[]{xx y
0 17.25
1 9.6667
2 6.1957
3 3.0974
4 1.7221
5 1.2431
6 0.87642
7 0.70183
8 0.57792
};\addlegendentry{dRM prod., SOGRAND} 

\addplot[blue,mark=x]
table[]{x y
0 17.458
1 14.167
2 7.1923
3 4.4857
4 2.2902
5 1.4518
6 1.0058
7 0.73643
8 0.63514
};\addlegendentry{dRM prod., ORBGRAND Pyndiah} 

\end{semilogyaxis}

\end{tikzpicture}
	\caption{Block Rician fading channel $(N_c=128,t=4,K=5)$ with \ac{QPSK} decoding performance of the $(1024, 441)$ 5G LDPC with maximum number of iterations $I_\text{max}=50$ as compared to a $(1024, 441)=(32,21)^2$ dRM product code decoded with \ac{SOGRAND}, i.e. using $1$line-ORBGRAND for SO list decoding $\alpha=0.5$ and maximum iteration number $I_\text{max}=20$, where lists are added to until $L=4$ or the predicted list-BLER is below $10^{-4}$.  Upper panel: \ac{BLER} and \ac{BER} performance. Middle panel: average number of queries per-bit until a decoding with \ac{SOGRAND}, where parallelized assumes all rows/columns are decoded in parallel. Lower panel: average number of iterations.}
	\label{fig:FER_fading1}
\end{figure}
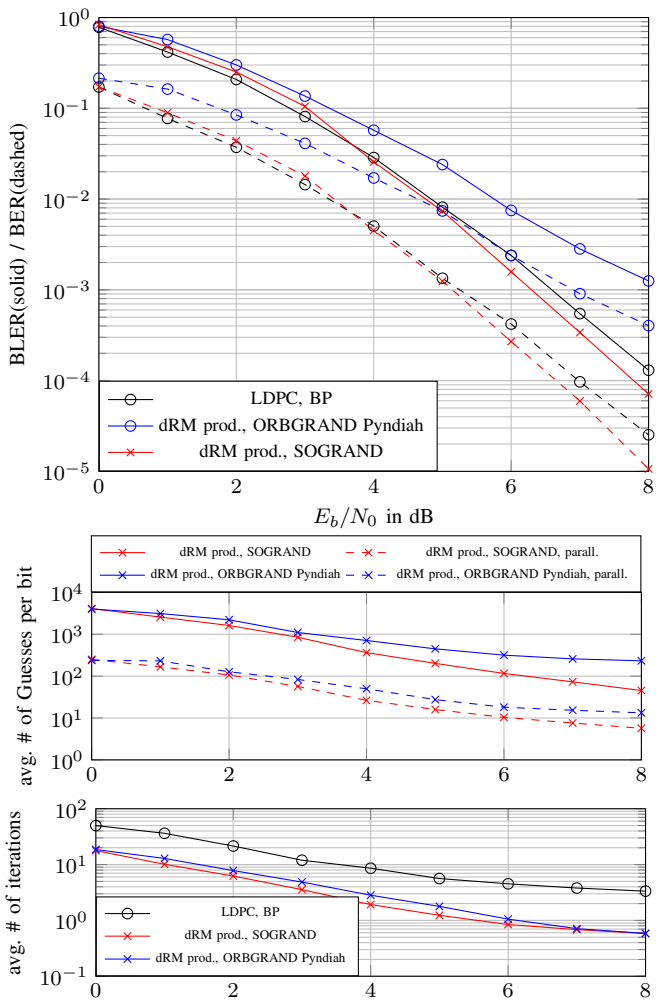

\begin{figure}
	\centering
	\footnotesize
	\begin{tikzpicture}[scale=1]
\footnotesize
\begin{semilogyaxis}[
legend style={at={(0,0)},anchor= south west, font=\scriptsize},
ymin=1e-6,
ymax=1,
width=3.5in,
height=3in,
grid=both,
xmin = 0,
xmax = 18,
xlabel = $E_b/N_0$ in dB,
ylabel = {BLER(solid) / BER(dashed)},
]

\addplot[black,mark=o]
table[]{x y
0 0.90909
2 0.44643
4 0.11601
6 0.022758
8 0.0055506
10 0.0013689
12 0.00056155
14 0.00023546
16 0.00011434
18 6.3117e-05
};\addlegendentry{LDPC, BP}

\addplot[blue,mark=o]
table[]{x y
0 0.95238
2 0.4878
4 0.16
6 0.057307
8 0.016129
10 0.0063492
12 0.002891
14 0.001325
16 0.00062619
18 0.00042841
};\addlegendentry{eBCH prod., ORBGRAND Pyndiah}

\addplot[red,mark=x]
table[]{x y
0 0.89286
2 0.66667
4 0.13089
6 0.020894
8 0.0052488
10 0.001035
12 0.00044941
14 0.00021093
16 0.00009687
18 4.9586e-05
};\addlegendentry{eBCH prod., SOGRAND} 

\addplot[black,mark=o,dashed,mark options=solid]
table[]{x y
0 0.1741
2 0.064191
4 0.01435
6 0.0033242
8 0.00080156
10 0.0001948
12 9.0861e-05
14 3.9171e-05
16 2.0828e-05
18 1.0186e-05
};

\addplot[blue,mark=o,dashed,mark options=solid]
table[]{x y
0 0.18703
2 0.093607
4 0.034438
6 0.012491
8 0.0036983
10 0.0014391
12 0.00066968
14 0.00032182
16 0.00014297
18 9.9363e-05
};

\addplot[red,mark=x,dashed,mark options=solid]
table[]{x y
0 0.12217
2 0.067201
4 0.012767
6 0.0019752
8 0.00050848
10 0.00010803
12 4.9523e-05
14 2.6402e-05
16 1.2832e-05
18 6.3435e-06
};

\end{semilogyaxis}

\end{tikzpicture}

\begin{tikzpicture}[scale=1]
\begin{semilogyaxis}[
legend style={at={(1,1)},anchor= south east, font=\tiny},
legend columns=2,
ymin=1e-3,
ymax=1e3,
width=3.5in,
height=1.5in,
ytick={10000,1000,100,10,1,0.1,0.01,0.001,0.0001,0.00001,0.000001,0.0000001,0.00000001},
grid=both,
xmin = 0,
xmax = 18,
ylabel = {avg. \# of Guesses per bit},
]

\addplot[red,mark=x]
table[]{x y
0 146.39
2 104.77
4 30.659
6 9.1122
8 2.5105
10 0.8296
12 0.26105
14 0.12549
16 0.07219
18 0.047852
};\addlegendentry{eBCH prod., SOGRAND} 

\addplot[red,mark=x,dashed,mark options=solid]
table[]{x y
0 9.0628
2 6.8918
4 2.0594
6 0.64072
8 0.2226
10 0.087621
12 0.028603
14 0.010847
16 0.0046217
18 0.0023605
};\addlegendentry{eBCH prod., SOGRAND, parall.}

\addplot[blue,mark=x]
table[]{x y
0 143.72
2 95.465
4 41.455
6 16.844
8 11.461
10 12.042
12 13.469
14 14.834
16 15.953
18 16.459
};\addlegendentry{eBCH prod., ORBGRAND Pyndiah} 

\addplot[blue,mark=x,dashed,mark options=solid]
table[]{x y
0 9.2945
2 5.9343
4 2.6752
6 1.0813
8 0.68959
10 0.63037
12 0.61724
14 0.6215
16 0.6363
18 0.64001
};\addlegendentry{eBCH prod., ORBGRAND Pyndiah, parall.} 

\end{semilogyaxis}

\end{tikzpicture}

\begin{tikzpicture}[scale=1]
\begin{semilogyaxis}[
legend style={at={(1,1)},anchor= north east, font=\tiny},
legend columns=1,
ymin=1e-1,
ymax=1e2,
width=3.5in,
height=1.5in,
ytick={10000,1000,100,10,1,0.1,0.01,0.001,0.0001,0.00001,0.000001,0.0000001,0.00000001},
grid=both,
xmin = 0,
xmax = 18,
ylabel = {avg. \# of iterations},
]

\addplot[black,mark=o]
table[]{x y
0 47
2 28.571
4 11.921
6 5.4965
8 3.2048
10 2.4382
12 2.1455
14 2.0459
16 2.0208
18 2.008
};\addlegendentry{LDPC, BP}

\addplot[red,mark=x]
table[]{x y
0 19.182
2 13.529
4 4.0235
6 1.3615
8 0.67619
10 0.55659
12 0.51629
14 0.50812
16 0.50394
18 0.50166
};\addlegendentry{eBCH prod., SOGRAND} 

\addplot[blue,mark=x]
table[]{x y
0 18.682
2 12.278
4 4.9554
6 1.7375
8 0.86875
10 0.66861
12 0.56843
14 0.52502
16 0.52322
18 0.51459
};\addlegendentry{eBCH prod., ORBGRAND Pyndiah} 

\end{semilogyaxis}

\end{tikzpicture}
	\caption{Block Rician fading channel $(N_c=256,t=2,K=8)$ with \ac{QPSK} decoding erformance of the $(1024,676)$ 5G LDPC with max.iterations $50$ compared to a $(1024,676)=(32,26)^2$ eBCH product code decoded with \ac{SOGRAND} with $\alpha=0.5$ and max iterations $20$, where lists are added to until $L=4$ or the predicted list-BLER is below $10^{-4}$. Upper panel: \ac{BLER} and \ac{BER}. Middle panel: average number of queries per-bit until a decoding with \ac{SOGRAND}, where parallelized assumes all rows/columns are decoded in parallel. Lower panel: average number of iterations.}
	\label{fig:FER_fading2}
\end{figure}

\begin{figure}
	\centering
	\footnotesize
	\begin{tikzpicture}[scale=1]
\footnotesize
\begin{semilogyaxis}[
legend style={at={(0,0)},anchor= south west, font=\scriptsize},
ymin=1e-7,
ymax=1,
width=3.5in,
height=3in,
grid=both,
xmin = 0,
xmax = 12,
xlabel = $E_b/N_0$ in dB,
ylabel = {BLER(solid) / BER(dashed)},
]

\addplot[black,mark=o]
table[]{x y
0 1
2 0.89286
4 0.21459
6 0.022894
8 0.0011978
10 9.5809e-05
12 2.1119e-05
};\addlegendentry{LDPC, BP}

\addplot[blue,mark=o]
table[]{x y
0 1
2 1
4 0.4878
6 0.10811
8 0.022173
10 0.0059154
12 0.0019635
};\addlegendentry{eBCH prod., ORBGRAND Pyndiah}

\addplot[red,mark=x]
table[]{x y
0 1
2 0.90909
4 0.27174
6 0.027056
8 0.0011023
10 7.3113e-5
12 1.4564e-05
};\addlegendentry{eBCH prod., SOGRAND} 

\addplot[black,mark=o,dashed,mark options=solid]
table[]{x y
0 0.15579
2 0.085477
4 0.013175
6 0.0014079
8 6.3166e-05
10 6.3637e-06
12 1.1726e-06
};

\addplot[blue,mark=o,dashed,mark options=solid]
table[]{x y
0 0.16577
2 0.12944
4 0.059523
6 0.013374
8 0.0028182
10 0.00074737
12 0.00024424
};

\addplot[red,mark=x,dashed,mark options=solid]
table[]{x y
0 0.12437
2 0.068173
4 0.013351
6 0.0013603
8 4.7962e-05
10 4.682e-06
12 7.4334e-07
};

\end{semilogyaxis}

\end{tikzpicture}

\begin{tikzpicture}[scale=1]
\begin{semilogyaxis}[
legend style={at={(1,1)},anchor= south east, font=\tiny},
legend columns=2,
ymin=1e-2,
ymax=1e3,
width=3.5in,
height=1.5in,
ytick={10000,1000,100,10,1,0.1,0.01,0.001,0.0001,0.00001,0.000001,0.0000001,0.00000001},
grid=both,
xmin = 0,
xmax = 12,
ylabel = {avg. \# of Guesses per bit},
]

\addplot[red,mark=x]
table[]{x y
0 153.58
2 144.32
4 66.925
6 14.754
8 3.7788
10 1.2767
12 0.43314
};\addlegendentry{eBCH prod., SOGRAND} 
\addplot[red,mark=x,dashed,mark options=solid]
table[]{x y
0 5.3619
2 5.1663
4 2.4483
6 0.55934
8 0.16844
10 0.082142
12 0.03634
};\addlegendentry{eBCH prod., SOGRAND, parall.}

\addplot[blue,mark=x]
table[]{x y
0 152.24
2 152.36
4 89.327
6 29.393
8 11.109
10 8.7744
12 9.8997
};\addlegendentry{eBCH prod., ORBGRAND Pyndiah} 

\addplot[blue,mark=x,dashed,mark options=solid]
table[]{x y
0 5.5368
2 5.5264
4 3.3215
6 1.0864
8 0.41249
10 0.30893
12 0.31702
};\addlegendentry{eBCH prod., ORBGRAND Pyndiah, parall.} 

\end{semilogyaxis}

\end{tikzpicture}

\begin{tikzpicture}[scale=1]
\begin{semilogyaxis}[
legend style={at={(0,0)},anchor= south west, font=\tiny},
ymin=1e-1,
ymax=1e2,
width=3.5in,
height=1.5in,
ytick={10000,1000,100,10,1,0.1,0.01,0.001,0.0001,0.00001,0.000001,0.0000001,0.00000001},
grid=both,
xmin = 0,
xmax = 12,
ylabel = {avg. \# of iterations},
]
\addplot[black,mark=o]
table[]{x y
0 50
2 50
4 21.227
6 7.4852
8 4.4503
10 3.5489
12 3.2044
};\addlegendentry{LDPC, BP}

\addplot[red,mark=x]
table[]{x y
0 20
2 19.333
4 8.9355
6 2.0685
8 0.76482
10 0.56362
12 0.51951
};\addlegendentry{eBCH prod., SOGRAND} 

\addplot[blue,mark=x]
table[]{x y
0 20
2 20
4 11.526
6 3.5559
8 1.1
10 0.65973
12 0.57074
};\addlegendentry{eBCH prod., ORBGRAND Pyndiah} 

\end{semilogyaxis}

\end{tikzpicture}
	\caption{Block Rician fading channel $(N_c=256,t=8,K=6)$ with \ac{QPSK} decoding performance of the $(4096,3249)$ 5G LDPC with max. iterations $50$ compared to a $(4096,3249)=(64,57)^2$ eBCH product code decoded with \ac{SOGRAND}, $\alpha=0.5$ and max. iteration number $20$, where lists are added to until $L=4$ or the predicted list-BLER is below $10^{-5}$.  Upper panel: \ac{BLER} and \ac{BER}. Middle panel: average number of queries per-bit until a decoding with \ac{SOGRAND}, where parallelized assumes all rows/columns are decoded in parallel. Lower panel: average number of iterations.}
	\label{fig:FER_fading3}
\end{figure}
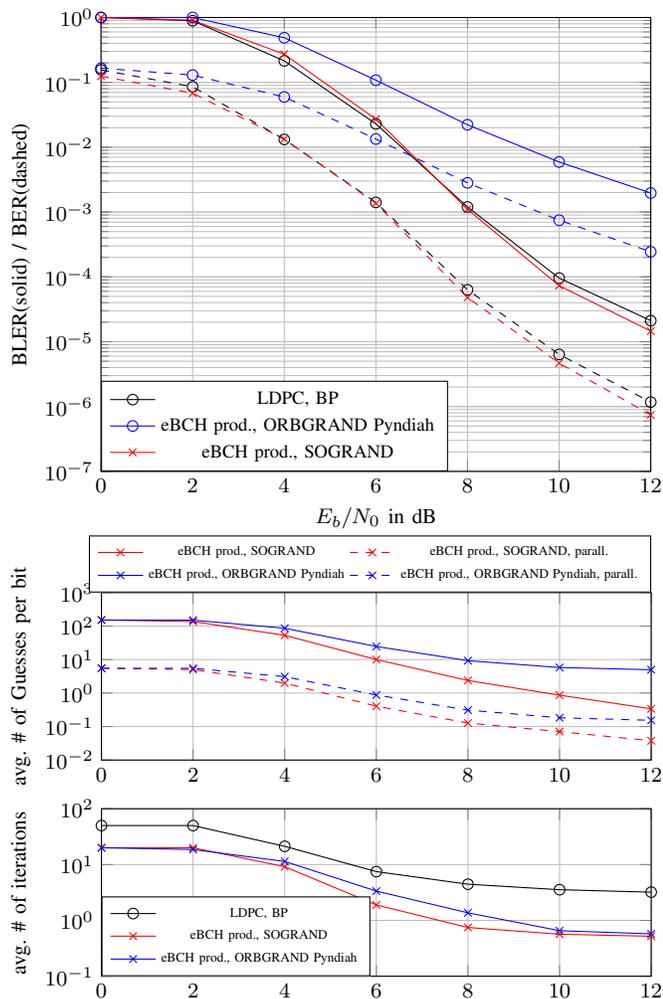

Performance evaluation results are shown in Fig.~\ref{fig:FER_fading1} and Fig.~\ref{fig:FER_fading2} 
for $1024$ bit codes constructed with dRM and eBCH codes, both of which \ac{GRAND} algorithms can readily decode. Fig.~\ref{fig:FER_fading3} shows equivalent results for a longer $4096$ bit code. The results in fading channels mirror those found for \ac{AWGN} channels, 
showing that the product codes, when using iterative \ac{SOGRAND}, significantly outperform Pyndiah's iterative scheme and have better \ac{BLER} and \ac{BER} performance compared to 5G \acp{LDPC} with \ac{BP} decoding, in a highly modularlized and parallelizable design that has a small number of iterations for decoding.

\subsection{Encoding Complexity.}
Product codes have extremely attractive properties in terms of encoding complexity and latency. To encode each row or column of a systematic code requires only $k(n-k)$ binary operations, where $(n-k)$ is small. There are $k$ such rows and $n$ columns, resulting in a total complexity of $(k+n)k(n-k)$, but all $k$ components can be encoded in parallel and then all $n$ components can be encoded in parallel, resulting in a latency of $2k(n-k)$ which is of order $k$. In hardware, for component codes of this size, encoding can be done in one clock cycle so, with full parallelism, encoding would take a total of two clock cycles. The \ac{QC} structure of the GLDPC codes we present also allows linear complexity encoding as for product codes~\cite{li2006efficient, ZP01} with similar parallelism for minimal latency.

In contrast, the encoding complexity of LDPC codes can be significant. The na\"ive application of Gaussian elimination schemes leads to cubic complexity in $n$. In order to tackle the complexity issues with LDPC encoding \cite{richardson2001efficient},
several structures have been imposed on LDPC codes. These constraints include LDPC codes with convolutional codes \cite{Chenetal06},
spatially coupled LDPC codes \cite{
Mitchelletal15}, using the sparsity of LDPC codes for lower triangulation in the encoding matrix \cite{RL14}, restricting the Tanner graphs to exhibit a more product-code-like structure \cite{
Huetal01}, column-scaled LDPC codes \cite{Zhaoetal13}, dual-diagonal structures \cite{Linetal08}, increasing field sizes \cite{Chang11, Songetal19}, root-protographs structures \cite{Fangetal19} and, primarily, \ac{QC}-LDPCs
 \cite{Djordjevicetal04}.
No such nuance is required with \ac{SOGRAND}.

\section{Discussion}

Efficient soft-detection decoding of powerful long, low-rate error correction codes has long  been a core objective in communications. The successful approach has been to create long codes by appropriate concatenation of component codes that provide \ac{SO} from their \ac{SI} in an iterative fashion. Turbo codes use powerful components but approximate \ac{SO} while \ac{LDPC} codes use weak components but accurate \ac{SO}. 

Here we demonstrate that \ac{GRAND} can bridge the gap between Turbo decoding of product codes and \ac{LDPC} codes by decoding powerful component codes and providing accurate SO with no additional computational burden. We have shown that when decoded with \ac{SOGRAND} simple product-like codes that avail of powerful, high-rate component codes can outperform the \ac{LDPC} codes in the 5G standard in fading channels. As \ac{GRAND} algorithms can efficiently decode any moderate redundancy component code, product codes offer a wide design space when decoded with \ac{SOGRAND}.

Our ultimate goal in this research direction is to design a comprehensive coding scheme that supports all \acp{MCS} in wireless systems. For example, for high-rate short codes, we directly use GRAND, while for longer codes, we apply SOGRAND with product or GLDPC codes. For higher-order modulations, product codes and GLDPC codes with SOGRAND are fully compatible with BICM, much like LDPC codes. In practical applications, punctured, shortened, or irregular product codes can be employed for rate adaptation and \ac{IR-HARQ}. For GLDPC codes, a Raptor-like structure~\cite{chen2015protograph}, akin to that used in 5G LDPC codes, could be applied for \ac{IR-HARQ}.

Circuit designs and a realization of \ac{ORBGRAND} in hardware~\cite{Riaz23} illustrate that codebook queries can be parallelised, resulting in low latency, low energy \ac{ORBGRAND} decoding of component codes. The component codes can be decoded in parallel to generate ultra low-latency long-code implementations. Due to that feature, it is possible to trade-off decoding area versus latency in hardware by determining the number of \ac{ORBGRAND} circuits in a chip. Moreover, the product-code-like design is highly modular and can be readily adapted to distinct lengths and rates without resorting to puncturing or significantly changing the architecture of the decoder. Product-like codes also have a significant encoding latency advantage over \ac{LDPC} codes as all rows and columns (or diagonals) can be encoded in parallel. 

While we demonstrate the approach with product and \ac{GLDPC} codes, a much broad palette of more sophisticated and ultimately more powerful, long, low-rate codes now becomes practical for soft detection decoding, offering a viable alternative to \ac{LDPC} codes as well as additional possibilities. For example codes such as implicit partial product-LDPC codes~\cite{Wangetal23}, which use \ac{LDPC} codes for rows and BCH codes for columns. 

\section*{Acknowledgments}
The authors would like to thank the Associate Editor and the anonymous reviewers for their valuable comments, which improved the presentation of the work.

\bibliographystyle{IEEEtran}
\bibliography{references}

\end{document}